\documentclass[11pt,a4paper,reqno]{amsart}

\usepackage{hyperref}
\usepackage[foot]{amsaddr}
\usepackage{changepage}
\usepackage{amsmath}
\usepackage{amsthm}
\usepackage{amsfonts}
\usepackage{amssymb}
\usepackage{array}
\usepackage[margin=2.4cm]{geometry}

\usepackage{soul}

\usepackage{enumitem}
\makeatletter
\newcommand{\mylabel}[2]{#2\def\@currentlabel{#2}\label{#1}}
\makeatother

\theoremstyle{plain}
\newtheorem{theorem}{Theorem}
\newtheorem*{theorem*}{Theorem}
\newtheorem{proposition}[theorem]{Proposition}
\newtheorem{corollary}[theorem]{Corollary}
\newtheorem*{corollary*}{Corollary}
\newtheorem{lemma}[theorem]{Lemma}

\newtheorem{problem}{Problem}

\theoremstyle{remark}
\newtheorem{remark}[theorem]{Remark}
\newtheorem{example}[theorem]{Example}

\theoremstyle{definition}
\newtheorem{definition}[theorem]{Definition}


\usepackage{tikz}


%
%
%
%
%
%
%

\newcommand{\Zset}{\mathbb{Z}}
\newcommand{\F}{\mathbb{F}}
\newcommand{\field}[1][]{\mathbb{F}_{#1}}
\DeclareMathOperator{\weight}{w}

\DeclareMathOperator{\distance}{d}
\DeclareMathOperator{\rank}{rk}

\DeclareMathOperator{\trace}{Tr}
\DeclareMathOperator{\GL}{GL}
\DeclareMathOperator{\Gal}{Gal}

\newcommand{\matrixring}[2]{#2^{#1}}
\newcommand{\norm}[2]{N_{#1}(#2)}
\newcommand{\lclm}[1]{\left[#1\right]_\ell}

\newcommand{\npmatrix}[1]{\left( \begin{matrix} #1 \end{matrix} \right)}

\newcommand{\B}{\mathcal B}
\newcommand{\C}{\mathcal C}
\newcommand{\G}{\mathcal G}

\begin{document}

\title{
Roos bound for skew cyclic codes in Hamming and rank metric
}

\author[Gianira N. Alfarano]{Gianira N. Alfarano$^1$}
\address{$^1$University of Zurich, Switzerland}
\curraddr{}
\email{gianiranicoletta.alfarano@math.uzh.ch}

\author[F. J. Lobillo]{F. J. Lobillo$^2$}
\address{$^2$University of Granada, Spain}
\curraddr{}
\email{jlobillo@ugr.es}

\author[Alessandro Neri]{Alessandro Neri$^3$}
\address{$^3$Technical University of Munich, Germany}
\curraddr{}
\email{alessandro.neri@tum.de}
\thanks{Research partially supported by grant PID2019-110525GB-I00
from {Agencia Estatal de Investigaci\'{o}n (AEI) and from Fondo Europeo de Desarrollo Regional (FEDER)}, and from Swiss National Science Foundation through grants no. 187711 and 188430.}

\subjclass[2010]{11T71; 94B65; 16S36}

\keywords{Cyclic codes, skew cyclic codes, Roos bound, rank-metric codes, MRD codes.}

\maketitle

\begin{abstract}
In this paper, a Roos like bound on the minimum distance for skew cyclic codes over a general field is provided. The result holds in the Hamming metric and in the rank metric. The proofs involve arithmetic properties of skew polynomials and an analysis of the rank of parity-check matrices. For the rank metric case, a way to arithmetically construct codes with a prescribed minimum rank distance, using the skew Roos bound, is also given. Moreover, some examples of MDS codes and MRD codes over finite fields are built, using the skew Roos bound. 
\end{abstract}

\section{Introduction}
In the theory of error correcting codes, a crucial step was represented by the introduction of algebraic structures, which led to the branch called \emph{algebraic coding theory}. More specifically,  the basic idea that initiated the study of \emph{linear codes} was to take a finite field $F$ as alphabet, and then use the vector space structure $F^n$ when dealing with codes and their codewords. Among the linear codes, one of the most studied families is the one of \emph{cyclic codes}. Their importance is given by the ring structure underlying  their polynomial representation. Formally, 
a cyclic block code $\mathcal C$ over $F$ is defined as an ideal of $F[x]/(x^n-1)$. It is well-known that the minimum Hamming distance of a cyclic code is lower bounded by the BCH bound \cite{bose1960class, bose1960further,hocquenghem1959codes}. Concretely, let $g(x)$ be the generator polynomial of $\mathcal C$, $\omega$ be a primitive $n$-th root of unity and $b,\delta$ be positive integers. If $g(\omega^{b+i})=0$ for $0\leq i\leq \delta-2$, i.e. $g$ has $\delta-1$ consecutive roots in an extension field of $F$, then the minimum Hamming distance of $\mathcal C$ is at least $\delta$. The cyclicity property was further investigated in order to construct codes with prescribed  Hamming distance. At a second step, Hartmann and Tzeng generalized the BCH bound deriving the well-known Hartmann-Tzeng (HT) bound \cite{hartmann1972generalizations}. Afterwards, Roos  derived further generalizations  which were shown to improve both the BCH and the HT bounds  \cite{roos1983new, roos1982generalization}.

Skew polynomial rings were introduced in 1930 by Ore in his seminal paper \cite{ore1933theory} and then they have been further studied by several authors, see for instance \cite{lam1985general,lam1988vandermonde, leroy2012noncommutative}.
The research on linear codes in this setting led to new codes with better parameters than the old known linear codes.  In 2007, Boucher, Geiselmann and Ulmer \cite{boucher2007skew} extended the definition of cyclicity to codes defined over the skew polynomial ring (see also \cite{boucher2009codes, boucher2009coding, chaussade2009skew}).
In these works, the authors derived bounds on the Hamming distance of skew cyclic codes, generalizing in some sense the BCH bound to skew cyclic codes. In \cite{Gomez/Lobillo/Navarro/Neri:2018}, the authors gave a version of the Hartmann-Tzeng bound for skew cyclic codes and provided a construction of these codes with prescribed designed Hamming distance. 

Skew polynomial rings played a crucial role also in the construction of codes endowed with the rank metric. These codes were first introduced independently by Delsarte \cite{delsarte1978}, Gabidulin \cite{gabidulin1985} and Roth \cite{roth1991}, and have been shown to have many applications, such as network coding \cite{silva2008rank, silva2011universal, etzion2018vector}, distributed data storage \cite{rawat2013optimal, calis2017general, ne18pmds} and  post-quantum cryptography \cite{gabidulin1991, overbeck2008structural, gaborit2013low}. One of the most important constructions of rank-metric codes makes use of the ring of linearized polynomials. More specifically, these codes are known as \emph{Gabidulin codes} and they are obtained by evaluating a particular set of linearized polynomials in a suitable set of points (see \cite{delsarte1978, gabidulin1985}). The connection with skew polynomials is due to the fact that there is a natural isomorphism between the ring of linearized polynomials over a finite field and the ring of skew polynomials. Generalizations of this construction were provided in \cite{chaussade2009skew}, where a rank-metric version of the BCH bound was proposed. Moreover the analogue of the Hartmann-Tzeng bound for skew-cyclic codes over finite field with respect to the rank metric was shown in \cite{martinez2017roots}. 

In this paper we provide a generalization of the Roos bound for skew cyclic codes in the Hamming metric and in the rank metric. Our results generalize previous bounds on the minimum distance of skew cyclic codes in the Hamming metric \cite{boucher2009codes, chaussade2009skew, Gomez/Lobillo/Navarro/Neri:2018}, and in the rank metric \cite{delsarte1978, gabidulin1991, roth1991, chaussade2009skew, augot2013rank, martinez2017roots}. However, our setting only requires a cyclic Galois extension of finite degree, without restricting  to the case of finite fields.

The paper is structured as follows. In Section \ref{sec:scc} we recall the basics of skew polynomial rings and the notions of skew cyclic codes, focusing on the family of skew Reed-Solomon
codes. Section \ref{sec:rankmetric} is dedicated to the rank metric. We define the rank metric in the most general setting and describe the construction of Gabidulin codes over any cyclic Galois extension. In Section \ref{sec:defsets} we fix the mathematical setting for the whole paper, focusing on the defining sets for skew cyclic codes, and we prove the results that are crucial for the main proofs. Section \ref{sec:RboundHamming} is devoted to the proof of the skew version of the Roos bound for the Hamming metric. We use the bound to construct some examples of (MDS) codes over finite fields. In Section \ref{sec:RboundRank} we provide the skew version of the Roos bound for the rank metric. We compare the construction of the codes in the Hamming metric with the one in the rank metric, which led to an interesting result, explaining that it could be possible to construct skew cyclic MRD codes, using the arithmetic properties of the defining sets. We conclude with some remarks and an open problem in Section \ref{sec:conclusion}.


\section{Skew cyclic codes and skew Reed-Solomon codes}\label{sec:scc}

In this section we recall some basic notions on skew polynomial rings and skew cyclic codes. The interested reader is referred to the recent survey of Gluesing-Luerssen \cite{gluesing2019skew}.

We will use the notation introduced in \cite[\S 2]{Gomez/Lobillo/Navarro/Neri:2018} to recall the definition of skew cyclic codes and some important known results. Let $F/K$ be a field extension of finite degree \(\mu\). We assume \(F/K\) is cyclic, i.e. its Galois group, \(\Gal(F/K)\), is cyclic. Fix a  generator \(\sigma\) of $\Gal(F/K)$, hence its order \(|\sigma|\) is \(\mu\) and \(K = F^\sigma\) is its invariant subfield. Let \(R = F[x;\sigma]\) be the skew polynomial ring induced by $\sigma$ over $F$ and $n$ be a multiple of $\mu$, namely \(n = \nu \mu\) for some $\nu$ positive integer. Recall that the multiplication rule over the skew polynomial ring $R$ is given by
\[ xa = \sigma(a)x  \text{ for all } a\in F .\]
 In order to define skew cyclic codes over $F$ it is enough to replace $F[x]/(x^n-1)$ by $\mathcal{R} := R/R(x^n-1)$, where $R(x^n-1)$ denotes the left ideal generated by $x^n-1$. Since $\sigma$ has finite order \(\mu\), the center $Z(R)$ of $R$ is given by the commutative polynomial ring $K[x^\mu]$ and therefore $x^n-1$ belongs to $Z(R)$. Hence, \(R(x^n-1)\) is a twosided ideal and the quotient $\mathcal R$ is a $K$-algebra which is isomorphic to $F^n$ as $F$-vector space, where the \(F\) action is given by left multiplication. 
 
 An $[n,k]$-\emph{linear code} $\C$ over $F$ is defined as a subspace of $F^n$ of dimension $k$. Hence, thanks to isomorphism described above, we can identify linear codes in $F^n$ as vector subspaces of $\mathcal R$. We define the \emph{Hamming distance} between two vectors in $F^n$ as the number of components in which they differ.  The \emph{minimum (Hamming) distance} of a linear code $\C$ is defined as the minimum over all the  distances between two distinct codewords in $\C$ and it is denoted by $\distance_H(\C)$. Equivalently, $\distance_H(\C)$ is given by the minimum (Hamming) weight of the nonzero codewords in $\C$, where the \emph{Hamming weight} of a vector $v\in F^n$ is defined as the number of its nonzero components  and it is denoted by $\weight(v)$. When the minimum distance $d=\distance_H(\C)$ is known, we write that $\C$ is an $[n,k,d]$-linear code. The parameters $n,k$ and $d$ of a linear code $\C$ satisfy the following inequality, known as Singleton bound \cite{singleton1964maximum}: $d\leq n-k+1$. When the minimum distance of $\C$ meets the bound with the equality, $\C$ is called \emph{maximum distance separable (MDS) code}.
 
 In the setting defined above, an \([n,k]\)-linear code \(\mathcal{C} \subseteq F^n\) is called \emph{skew cyclic} if $\mathcal{C}$ is a left ideal of $\mathcal R$. 

Since every left ideal of $\mathcal R$ is principal, there exists a polynomial \(g \in R\) which is a right divisor of $x^n-1$, namely \(g \mid_r x^n-1\), such that  \(\deg(g) = n-k\) and $g$ generates $\mathcal C$. We will write \(\mathcal{C} = \mathcal{R}g\).  


Evaluation in skew polynomials makes use of truncated norms.
%
For any $i\in \mathbb{N}_0$, the \emph{$i$-th truncated norm on $F$} is defined as
\(N_i: F\to F,\)
with $N_0(a) = 1$ and $N_i(a) =\prod_{j=0}^{i-1}\sigma^j(a)$ for $i>0$, for any $a\in F$. This is a special case of \cite[(2.3)]{lam1988vandermonde}, where a deep discussion of evaluations can be found.
%
Note that $N_1(a)=a$ and $N_{i+1}(a) = N_i(a)\sigma^i(a)$ for any $i>0$. 
If $f(x)=\sum_{i=0}^r f_ix^i \in R$, it follows that 
\[
f(x) = q(x) (x-a) + \sum_{i=0}^r f_i N_i(a)
\]
as proved in \cite[Lemma 2.4]{lam1988vandermonde}, hence $f(a) = \sum_{i=0}^r f_i N_i(a)$ is the correct notion of evaluation of skew polynomials. 

Since \(R\) is a left (and right) Euclidean domain, greatest common right divisor and least common left multiple exist and can be computed. We denote the least common left multiple of two polynomials \(f,g \in R\) by \(\lclm{f,g}\). A detailed  computational treatment of skew polynomials can be found in \cite{Gomez:2013}.

The structure of a skew cyclic code
is better understood if a full decomposition of \(g\) as least common left multiple of linear polynomials can be provided.
Let \(E/K\) be a cyclic field extension of degree \(n\) and \(\theta \in \Gal(E/K)\) an automorphism of degree \(n\), i.e. \(K = E^\theta\). Let \(S = E[x;\theta]\) and \(\mathcal{S} := S/S(x^n-1)\). Recall that \(\mathcal{S} \cong \matrixring{n \times n}{K}\) is simple Artinian, see for instance \cite[Theorem 1]{Gomez/Lobillo/Navarro:2016b}. Hence all simple modules are isomorphic and given \(g \in S\) of degree $n-k$ with \(g \mid_r x^n-1\), there exist \(\beta_0, \beta_1, \dots, \beta_{n-k-1} \in E\) such that $g$ is the least common left multiple of $\{x - \beta_i \mid 0\leq i\leq n-k-1\}$, that is
\[
g = \lclm{x - \beta_0, \dots, x-\beta_{n-k-1}}.
\]

Recall that \(x - \beta \mid_r x^n-1\) if and only if \(\norm{n}{\beta} = 1\), and by Hilbert's Theorem 90 (see e.g. \cite[Chapter VI, Theorem 6.1]{Lang:2002}) this happens if and only if \(\beta = \theta(\alpha) \alpha^{-1}\) for some \(\alpha \in E\). Hence \(\beta_i = \theta(\alpha_i) \alpha_i^{-1}\) for \(K\)-linear independent \(\alpha_0, \dots, \alpha_{n-k-1} \in E\), see \cite[Theorem 5.3]{Delenclos/Leroy:2007}. When these linear independent elements are part of a normal basis, a better knowledge of the parameters of the code is obtained. Concretely, we have the following result. 


\begin{proposition}[{\cite[Theorem 4]{Gomez/Lobillo/Navarro:2017c}}]\label{prop:skewBCHbound}
Let \(\alpha \in E\) such that \(\left\{\alpha, \theta(\alpha), \dots, \theta^{n-1}(\alpha)\right\}\) is a normal basis and \(\beta = \theta(\alpha) \alpha^{-1}\). Let $1\leq \delta \leq n$ and  \(g = \lclm{\left\{ x - \theta^i(\beta) ~|~ 0 \leq i \leq \delta-2 \right\}}\). Then \(\mathcal{S}g \subseteq \mathcal{S}\) is an MDS code of length \(n\) and minimum Hamming distance \(\delta\). 
\end{proposition}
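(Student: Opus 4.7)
The plan is to establish the MDS claim by combining an upper bound via Singleton with a matching lower bound on the minimum distance obtained from a rank analysis of a parity-check matrix, in the spirit of the classical BCH argument but with truncated norms replacing powers of a primitive root.

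First I would settle the dimension. Write $\alpha_i := \theta^i(\alpha)$, so that $\theta^i(\beta) = \theta(\alpha_i)\alpha_i^{-1}$ for every $i$. Because $\{\alpha_0,\dots,\alpha_{n-1}\}$ is a normal basis, any $\delta-1 \leq n$ of these elements are $K$-linearly independent, and therefore by \cite[Theorem 5.3]{Delenclos/Leroy:2007} the polynomial $g = \lclm{x-\theta^i(\beta) \mid 0 \leq i \leq \delta-2}$ has degree exactly $\delta-1$. Consequently $\dim_E(\mathcal{S}g) = n-\delta+1$, and the Singleton bound forces $\distance_H(\mathcal{S}g) \leq \delta$.

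Next I would set up the parity-check side. A codeword $c = \sum_{j=0}^{n-1} c_j x^j \in \mathcal{S}g$ iff $g \mid_r c$; since $x-\theta^i(\beta) \mid_r g$ and right divisibility by $x-a$ corresponds to the vanishing at $a$ of the evaluation $c(a) = \sum_j c_j N_j(a)$, this is equivalent to the $\delta-1$ parity checks $\sum_{j=0}^{n-1} c_j N_j(\theta^i(\beta)) = 0$ for $0 \leq i \leq \delta-2$. Hence the matrix $H = (N_j(\theta^i(\beta)))_{i,j}$ (with $0 \leq i \leq \delta-2$, $0 \leq j \leq n-1$) is a parity-check matrix for $\mathcal{S}g$, and it suffices to show that every $(\delta-1) \times (\delta-1)$ submatrix of $H$ is nonsingular. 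The central computation is a telescoping of the truncated norm: since $\theta^l(\beta) = \theta^{l+1}(\alpha)\theta^l(\alpha)^{-1}$,
\[
N_j(\theta^i(\beta)) = \prod_{l=0}^{j-1} \theta^{i+l+1}(\alpha)\theta^{i+l}(\alpha)^{-1} = \theta^{i+j}(\alpha)\theta^i(\alpha)^{-1} = \alpha_{i+j}\alpha_i^{-1}.
\]
Fixing column indices $0 \leq j_1 < \cdots < j_{\delta-1} \leq n-1$, I would factor $\alpha_i^{-1}$ out of each row and observe that the resulting square matrix has $(i,s)$-entry $\theta^{i+j_s}(\alpha) = \theta^i(\alpha_{j_s})$. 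This is a Moore matrix, whose determinant is nonzero precisely when $\alpha_{j_1},\dots,\alpha_{j_{\delta-1}}$ are $K$-linearly independent, a classical consequence of Artin's theorem on the independence of characters. The normal basis hypothesis supplies exactly this independence, so every such minor is nonzero, every $\delta-1$ columns of $H$ are linearly independent, and $\distance_H(\mathcal{S}g) \geq \delta$.

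The step I expect to be most delicate is the telescoping identification of $N_j(\theta^i(\beta))$ with $\alpha_{i+j}\alpha_i^{-1}$ and the subsequent recognition of the square minors of $H$ as Moore matrices; once this reduction is in place, the invertibility is a clean consequence of the normal basis property, and combining the two distance bounds gives $\distance_H(\mathcal{S}g) = \delta$, proving that $\mathcal{S}g$ is MDS.
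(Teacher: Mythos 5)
Your proposal is correct and follows essentially the same route as the paper: the paper cites this result rather than reproving it, but it states that the proof rests on the Circulant Lemma (Lemma~\ref{circulantlemma}), and your telescoping of the truncated norm, the parity-check description, and the reduction of the minors to Moore-type matrices built from the normal basis are exactly the content of Propositions~\ref{prop:paritycheck} and~\ref{prop:kerSRS} combined with that lemma. The only point left implicit in your write-up is that each $x-\theta^i(\beta)$ right-divides $x^n-1$ (so that $\mathcal{S}g$ has dimension $n-\deg g$), which follows immediately from $N_n(\theta^i(\beta))=1$ and is noted in the paper.
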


These codes are usually called \emph{skew Reed-Solomon codes}, and denoted by 
\[
\operatorname{sRS}^\theta_{\beta}(n,\delta) = \mathcal{S} \lclm{\left\{ x - \theta^i(\beta) ~|~ 0 \leq i \leq \delta-2 \right\}}.
\]

From Proposition \ref{prop:skewBCHbound}, one can easily recognize the generalization of the BCH bound to skew cyclic codes.

The proof of Proposition \ref{prop:skewBCHbound} is based on Circulant Lemma, which is a particular case of \cite[Corollary 4.13]{lam1988vandermonde}. We include the statement since it is going to be used to prove the main results of this paper. 
  
\begin{lemma}[Circulant Lemma]\label{circulantlemma}
Let \(\{\alpha_0, \dots, \alpha_{n-1}\}\) be a \(K\)--basis of \(E\). Then, for every positive integer \(t \leq n\) and every subset \(\{k_1, k_2, \dots, k_{t}\} \subseteq \{0, 1, \dots, n-1\}\),
\[
\begin{vmatrix}
\alpha_{k_1} & \theta(\alpha_{k_1}) & \dots & \theta^{t-1}(\alpha_{k_1}) \\
\alpha_{k_2} & \theta(\alpha_{k_2}) & \dots & \theta^{t-1}(\alpha_{k_2}) \\
\vdots & \vdots &  & \vdots \\
\alpha_{k_{t}} & \theta(\alpha_{k_{t}}) & \dots & \theta^{t-1}(\alpha_{k_{t}})
\end{vmatrix} \neq 0.
\]
\end{lemma}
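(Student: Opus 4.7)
The plan is to argue by contradiction, converting a putative dependence of the columns into a short skew polynomial with too many right roots, and then appealing to a Vandermonde-type bound.

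First, suppose for contradiction that the determinant vanishes. Then there exist $c_0,\dots,c_{t-1}\in E$, not all zero, such that
\[
\sum_{i=0}^{t-1} c_i\theta^i(\alpha_{k_j}) = 0\qquad\text{for every } j=1,\dots,t.
\]
Each $\alpha_{k_j}$ is nonzero (it lies in a $K$-basis), so I would next rewrite $\theta^i(\alpha_{k_j})$ in terms of truncated norms. Setting $\beta_j := \theta(\alpha_{k_j})\alpha_{k_j}^{-1}$ and using the telescoping identity $\norm{i}{\beta_j}=\theta^i(\alpha_{k_j})\alpha_{k_j}^{-1}$ (which is immediate from $\norm{i+1}{\beta_j}=\norm{i}{\beta_j}\theta^i(\beta_j)$), I obtain $\theta^i(\alpha_{k_j})=\alpha_{k_j}\norm{i}{\beta_j}$.

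Now I would introduce the skew polynomial $f(x):=\sum_{i=0}^{t-1}c_ix^i\in S=E[x;\theta]$, which is nonzero and has degree at most $t-1$. Using the evaluation formula $f(\beta)=\sum_i c_i\norm{i}{\beta}$ recalled earlier in the paper, the displayed equation becomes
\[
0 \;=\; \sum_{i=0}^{t-1} c_i\theta^i(\alpha_{k_j}) \;=\; \alpha_{k_j}\sum_{i=0}^{t-1}c_i\norm{i}{\beta_j}\;=\;\alpha_{k_j}f(\beta_j),
\]
so $f(\beta_j)=0$ for $j=1,\dots,t$. In other words, $f$ has $t$ right roots while $\deg f\le t-1$.

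To close the argument I would invoke the skew Vandermonde/root-bound of Lam and Leroy \cite[Corollary 4.13]{lam1988vandermonde}: a nonzero skew polynomial of degree $\le t-1$ cannot vanish on a $P$-independent set of $t$ elements. Hence $\{\beta_1,\dots,\beta_t\}$ would have to be $P$-dependent. The final step (which I expect to be the main obstacle, since it is the point where the hypothesis on the $\alpha_{k_j}$ is really used) is to translate $P$-independence of the $\beta_j=\theta(\alpha_{k_j})\alpha_{k_j}^{-1}$ into $K$-linear independence of the $\alpha_{k_j}$. In the cyclic Galois setting this is precisely the Hilbert~90 correspondence already invoked in the paper via \cite[Theorem 5.3]{Delenclos/Leroy:2007}: conjugacy-class representatives coming from $K$-linearly independent $\alpha$'s are $P$-independent. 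Since $\{\alpha_{k_1},\dots,\alpha_{k_t}\}$ sits inside a $K$-basis of $E$, it is $K$-linearly independent, so $\{\beta_j\}$ is $P$-independent, contradicting the previous paragraph. (Equivalently, one can first factor $\alpha_{k_j}$ out of the $j$-th row of the matrix and recognise the remaining determinant as the skew Vandermonde $\det(\norm{i}{\beta_j})$, reducing the whole statement to a direct application of \cite[Corollary 4.13]{lam1988vandermonde}.)
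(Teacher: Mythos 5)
Your argument is correct and follows essentially the route the paper itself indicates: the paper gives no in-text proof but states that the lemma is a particular case of Lam--Leroy's skew Vandermonde result \cite[Corollary 4.13]{lam1988vandermonde}, which is exactly the reduction you carry out by factoring $\alpha_{k_j}$ out of each row and passing to the $\beta_j=\theta(\alpha_{k_j})\alpha_{k_j}^{-1}$, with $P$-independence supplied by the Hilbert~90 correspondence of \cite[Theorem 5.3]{Delenclos/Leroy:2007}. Nothing further is needed.
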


An elementary proof is available in \cite{Gomez/Lobillo/Navarro:2017c}.

\section{Rank metric over any field extension}\label{sec:rankmetric}

Although we will always consider cyclic extensions, here we discuss the rank metric in the most general case, in the spirit of the recent works of Augot, Loidreau and Robert \cite{augot2013rank,augot2014generalization, augot2018generalized}. A similar approach was first investigated by Roth in \cite[Section 6]{roth1996tensor}. 


Let $K$ be a field and $E$ be an extension field of degree $n$. Let $\theta \in \Gal(E/K)$ with order \(|\theta| = \eta\), so $\eta$ divides $n$. Let moreover $\B=\{b_1,\ldots, b_n\}$ be an ordered $K$-basis of $E$. For a given vector $v=(v_1,\ldots, v_N) \in E^N$, we consider the following two matrices:
\[
M_{v,\theta}:=\npmatrix{v_1 & v_2 & \cdots & v_N \\ \theta(v_1) & \theta(v_2) & \cdots & \theta(v_N) \\
\vdots & \vdots &  & \vdots \\  \theta^{\eta-1}(v_1) & \theta^{\eta-1}(v_2) & \cdots & \theta^{\eta-1}(v_N) }, \]
\[
M_{v,\B}:=\npmatrix{x_{1,1} & x_{2,1} & \cdots & x_{N,1} \\  x_{1,2} & x_{2,2} & \cdots & x_{N,2} \\  \vdots & \vdots & & \vdots \\
x_{1,n} & x_{2,n} & \cdots & x_{N,n}},\]
where $v_i=\sum_{j=1}^n x_{i,j}b_j$, for every $i=1,\ldots, N$.

Augot, Loidreau and Robert  defined in \cite{augot2013rank} two different rank-weights for a vector $v \in E^N$ as follows. Let $E$, $K$ and $\theta$ be as above, and let $v \in E^N$. The quantities $\weight_K(v)$ and $\weight_E(v)$ are defined as
\begin{align*}\weight_K(v):= & \rank_K(M_{v,\theta})=\rank_K(M_{v,\B}), \\
\weight_E(v):= & \rank_E(M_{v,\theta})=\deg(p_v),\end{align*}
where $p_v=\lclm{x-v_1, \ldots, x-v_N} \in E[x;\theta]$. It was shown by the same authors that these quantities are all equal in a special case.

\begin{proposition}\cite[Proposition 5]{augot2013rank}
If $K=E^\theta$, then $\weight_E(v)=\weight_K(v)$, and they are both equal to
\[ \weight_R(v):=\dim_{K}\langle v_1,\ldots, v_N\rangle_K.\]
\end{proposition}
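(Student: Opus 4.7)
The plan is to reduce both rank-weights to the combinatorial invariant $\weight_R(v)=\dim_K\langle v_1,\ldots,v_N\rangle_K$ by exhibiting factorizations of the matrices $M_{v,\B}$ and $M_{v,\theta}$ through a $K$--basis of $\langle v_1,\ldots,v_N\rangle_K$. The hypothesis $K=E^\theta$ (which forces $|\theta|=n$, so $M_{v,\theta}$ has $n$ rows) enters crucially in the analysis of $M_{v,\theta}$, because it allows the coefficients in such a basis expansion to be treated as scalars that are fixed by every power of $\theta$.

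The equality $\weight_K(v)=\weight_R(v)$ is essentially immediate from the definition of $M_{v,\B}$: its $i$-th column is, by construction, the coordinate vector of $v_i$ with respect to $\B$. Hence the $K$--column space of $M_{v,\B}$ is isomorphic to $\langle v_1,\ldots,v_N\rangle_K$, yielding $\rank_K(M_{v,\B})=\weight_R(v)$. This step does not use the hypothesis $K=E^\theta$.

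For the equality $\weight_E(v)=\weight_R(v)$, set $t=\weight_R(v)$ and choose a $K$--basis $w_1,\ldots,w_t$ of $\langle v_1,\ldots,v_N\rangle_K$, extended to a $K$--basis $\{w_1,\ldots,w_n\}$ of $E$. Writing $v_i=\sum_{j=1}^t a_{ij}w_j$ with $a_{ij}\in K$, and using that $\theta$ fixes every $a_{ij}$, one obtains $\theta^k(v_i)=\sum_j a_{ij}\theta^k(w_j)$. This produces a factorization $M_{v,\theta}=W\cdot A$, where $W\in E^{n\times t}$ has $(k,j)$-entry $\theta^k(w_j)$ and $A\in K^{t\times N}$ has $(j,i)$-entry $a_{ij}$. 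The columns of $A$ are precisely the coordinate vectors of the $v_i$ in the basis $\{w_1,\ldots,w_t\}$, so $\rank_E(A)=\rank_K(A)=t$; and the $t\times t$ submatrix of $W$ formed by the rows indexed by $0,1,\ldots,t-1$ has nonzero determinant by the Circulant Lemma (applied to the extended basis), giving $\rank_E(W)=t$. Since both factors have rank $t$ and the common inner dimension is $t$, Sylvester's rank inequality forces $\rank_E(M_{v,\theta})=t$, as required.

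The main obstacle is spotting that the hypothesis $K=E^\theta$ is exactly what makes the factorization $M_{v,\theta}=W\cdot A$ possible: without $a_{ij}\in K$ one could not pull the coefficients through $\theta^k$. Once that factorization is in place, the Circulant Lemma immediately handles the $E$--rank of $W$ and the remainder is routine linear algebra; an alternative finish via $\weight_E(v)=\deg(p_v)$ would require a separate argument relating the degree of the least common left multiple $\lclm{x-v_i}$ to the $K$--dimension of the $v_i$, which this matrix approach elegantly sidesteps.
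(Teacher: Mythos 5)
Your argument is correct. Note that the paper itself offers no proof of this statement: it is quoted verbatim from \cite[Proposition 5]{augot2013rank}, so there is no internal proof to compare against. On its own merits, your proof is complete and self-contained. The identification $\rank_K(M_{v,\B})=\dim_K\langle v_1,\ldots,v_N\rangle_K$ is indeed immediate from the definition of $M_{v,\B}$, and the factorization $M_{v,\theta}=W\cdot A$ is valid precisely because $K=E^{\theta}$ puts the coefficients $a_{ij}$ in the fixed field (and also forces $|\theta|=n$, so that $W$ has $n\geq t$ rows and the extension of $w_1,\ldots,w_t$ to a full $K$-basis of $E$ is possible). The application of the Circulant Lemma to the extended basis legitimately yields $\rank_E(W)=t$, since that lemma is stated exactly under the hypothesis that $\theta$ has order $n$ with $K=E^{\theta}$; combined with $\rank_E(A)=\rank_K(A)=t$ and Sylvester's inequality through the inner dimension $t$, this pins down $\rank_E(M_{v,\theta})=t$. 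Your route differs from the one in the cited source, which works through the skew-polynomial characterization $\weight_E(v)=\deg\lclm{x-v_1,\ldots,x-v_N}$ and the minimal skew polynomial of the $K$-span; the matrix factorization you use trades that algebraic input for the Circulant Lemma, which is arguably more in the spirit of this paper since that lemma is already its central tool. One small point of care: you prove $\rank_E(M_{v,\theta})=\rank_K(M_{v,\B})=t$, which suffices for the statement only because the remaining identities ($\rank_K(M_{v,\theta})=\rank_K(M_{v,\B})$ and $\rank_E(M_{v,\theta})=\deg(p_v)$) are built into the definitions of $\weight_K$ and $\weight_E$ as quoted; it is worth saying so explicitly.
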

%
%

For the rest of this section we will only deal with the case $E^\theta=K$, i.e. \(\Gal(E/K)\) is cyclic and the order of \(\theta\) is \(n\), therefore we will use the notation $\weight_R(v)$ to denote the \emph{rank weight} of a vector with respect to the cyclic extension $E/K$.

\begin{definition}
Let $E/K$ be a cyclic field extension of finite degree, then the \emph{rank distance} of two vectors $u,v \in E^N$ with respect to the extension $E/K$  is defined as  $\distance_R(u,v):=\weight_R(u-v)$.
\end{definition}

With this metric, we can introduce the notion of rank-metric codes.

\begin{definition}
Let $E/K$ be a cyclic finite extension field and let $N, k, d$ be positive integers. An $[N,k,d]_{E/K}$ \emph{rank-metric code} $\C$ is a $k$-dimensional $E$-subspace of $E^N$, endowed with the rank metric. The integer $N$ is called the \emph{length} of $\C$, $k$ is the \emph{dimension} of $\C$ and  $d$ is defined as
\[d=\distance_R(\C):=\min \{ \distance_R(u,v) \mid u,v \in \C, u \neq v \}\]
and is called \emph{minimum rank distance} of $\C$.
\end{definition}

\begin{definition}
 Let $k \leq N$ and $g \in E^N$ be a vector such that $\weight_R(g)=N$, and $\tau$ be a generator of $\Gal(E/K)$. Then, the \emph{$\tau$-Gabidulin code} $\G_{k,\tau}(g)$ is the code
\[\G_{k,\tau}(g)=\left\langle g, \tau(g), \ldots, \tau^{k-1}(g)\right\rangle.\]
\end{definition}

Observe that in the definition we are implicitly assuming that $n \geq N$, since for every $v \in E$ we have $\weight_R(v) \leq [E:K]=n$.

Gabidulin codes were constructed by Delsarte \cite{delsarte1978} and Gabidulin \cite{gabidulin1985} independently over finite fields, when $\tau$ is the Frobenius automorphism, and then generalized by Kshevetsky and Gabidulin in \cite{kshevetskiy2005} to any generator of the Galois group. The general definitions for arbitrary fields were  due to Roth in \cite[Section 6]{roth1996tensor} and to Augot, Loidreau and Robert in \cite{augot2013rank}. 

These codes are known to be \emph{maximum rank distance (MRD)}, i.e. the minimum rank distance of a Gabidulin code is $N-k+1$, which is the maximum possible value according to the Singleton-like bound for the rank metric (see \cite{delsarte1978, gabidulin1985} for the finite field case, \cite{augot2013rank} for general fields). Moreover, it is well-known that Gabidulin codes are closed under duality. This means that for every $g \in E^N$ such that $\weight_R(g)=N$,  there exists $h \in E^N$ such that $\weight_R(h)=N$ and $\G_{k,\tau}(g)^\perp=\G_{N-k,\tau}(h)$, where the dual is taken with respect to the standard inner product.

There is a way to characterize the minimum rank distance of an $[N,k,d]_{E/K}$ rank-metric code $\C$ in terms of the minimum Hamming distance of a family of linear block codes obtained from $\C$. This is explained by the next proposition, which directly follows from \cite[Theorem 1]{gabidulin1985} for the finite field case. For this purpose, we introduce the following notation. For a given set $\mathcal H \subseteq E^N$, and a matrix $M \in K^{N \times N}$, we write $\mathcal H \cdot M:=\{uM \mid u \in \mathcal H\}$.

\begin{proposition}\label{prop:minrank}
 Let $\C$ be an $[N,k,d]_{E/K}$ rank-metric code. Then
 $$\distance_R(\C)=\min\left\{ \distance_H(\C \cdot M) \mid M \in \GL_N(K) \right\}.$$
\end{proposition}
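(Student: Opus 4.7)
The plan is to prove both inequalities separately, relying on two elementary observations: first, that right multiplication by a matrix $M \in \GL_N(K)$ preserves the rank weight, and second, that the rank weight is always bounded above by the Hamming weight.

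First I would show $\distance_R(\C) \leq \distance_H(\C \cdot M)$ for every $M \in \GL_N(K)$. The key observation is that the entries of $vM$ are $K$-linear combinations of the entries of $v$, so the $K$-span of the entries of $vM$ is contained in $\langle v_1, \dots, v_N \rangle_K$; applying the same argument to $M^{-1}$ gives the reverse inclusion. Hence $\weight_R(vM) = \weight_R(v)$ for every $v \in E^N$. On the other hand, if $u \in E^N$ has Hamming weight $s$, its nonzero entries span a $K$-subspace of dimension at most $s$, so $\weight_R(u) \leq \weight_H(u)$. Combining these, for any nonzero $v \in \C$ and any $M \in \GL_N(K)$,
\[
\weight_R(v) = \weight_R(vM) \leq \weight_H(vM).
\]
Minimizing over $v \in \C \setminus \{0\}$ (noting that $v M$ ranges over the nonzero elements of $\C \cdot M$) yields $\distance_R(\C) \leq \distance_H(\C \cdot M)$.

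For the reverse inequality, I would pick a nonzero $v \in \C$ with $\weight_R(v) = d := \distance_R(\C)$ and exhibit a single matrix $M \in \GL_N(K)$ for which $\weight_H(vM) = d$. Let $w_1, \dots, w_d$ be a $K$-basis of $\langle v_1, \dots, v_N \rangle_K$ and write $v_j = \sum_{i=1}^d a_{i,j} w_i$, so that $v = (w_1, \dots, w_d)\, A$ with $A \in K^{d \times N}$ of rank $d$. Since $A$ has full row rank, there exists $M \in \GL_N(K)$ such that the last $N-d$ columns of $AM$ are zero (standard column reduction over $K$). Then $vM = (w_1, \dots, w_d)\,(AM)$ has at most $d$ nonzero entries, so
\[
\weight_H(vM) \leq d = \weight_R(v) = \weight_R(vM) \leq \weight_H(vM),
\]
forcing equality. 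Since $vM \in \C \cdot M$ is nonzero, $\distance_H(\C \cdot M) \leq d$, and combined with the first part this gives the desired equality (and shows that the minimum is attained).

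The only non-routine step is the construction of $M$ in the second part, but it reduces to the standard fact that a full-rank $d \times N$ matrix over a field can be brought, by invertible column operations, to a form whose last $N-d$ columns vanish; everything else is bookkeeping.
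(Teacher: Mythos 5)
Your proposal is correct and follows essentially the same route as the paper: one direction from $\weight_R(vM)=\weight_R(v)$ together with $\weight_R\leq\weight_H$, and the other by taking a codeword of minimal rank weight and using an invertible matrix over $K$ to compress its entries onto a $K$-basis of their span, leaving the remaining coordinates zero. You merely spell out the column-reduction step that the paper leaves implicit.
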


\begin{proof}
 Let $\delta:=\min\left\{ \distance_H(\C \cdot M) \mid M \in \GL_N(K) \right\}$ and $d:=\distance_R(\C)$. For every $c \in \C$, $M \in \GL_N(K)$, we have $\weight_R(cM)=\weight_R(c)$, and $\weight_R(cM)\leq \weight(cM)$. Hence, $\delta \geq d$. On the other hand, suppose that $c\in \C$ is of minimal rank-weight. Let $S:=\langle c_1,\ldots, c_N \rangle_{K}$ that for hypothesis has dimension $d$ over $K$, and choose a basis $v_1,\ldots,v_d$ of $S$. Hence, there exists a matrix $\bar{M} \in \GL_N(K)$ such that $c\bar{M}=(v_1,\ldots, v_d,0,\ldots,0)$. This implies that $\delta\leq \distance_H(\C \cdot \bar{M})\leq \weight(c\bar{M}) = d$, which concludes the proof.
\end{proof}

\section{Defining sets}\label{sec:defsets}

For the rest of the paper, \(F/K\) will denote an arbitrary cyclic field extension and \(\sigma \in \Gal(F/K)\) an automorphism of order \(|\sigma| = \mu\) such that \(K = F^\sigma\). We say that \(\sigma\) \emph{has an extension \(\theta\) of degree \(\nu\)} if  there exists a field extension \(E/F\) and \(\theta \in \Gal(E/K)\) such that \(|\theta| = n = \nu \mu\), \(\theta_{|F} = \sigma\) and \(E^\theta = F^\sigma = K\). 

We fix such an extension \(E/F\) of degree \(\nu\). 

\begin{center}

\begin{tikzpicture}[node distance = 1cm, auto]
      \node (K) {$K$};
      \node (F) [above of=K] {$F$};
      \node (E) [above of=F] {$E$};
      \draw[-] (K) to node {$\mu$} (F);
      \draw[-] (F) to node {$\nu$} (E);
      \draw[-, bend right] (K) to node [swap] {$n$} (E);
      \end{tikzpicture}
    
\end{center}

Recall that \(R = F[x;\sigma]\), \(\mathcal{R} = \frac{R}{R(x^n-1)}\), \(S = E[x;\theta]\) and  \(\mathcal{S} = \frac{S}{S(x^n-1)}\).  Since for any $f\in R$, we have $Sf\cap R = Rf$ (see \cite[Lemma 2.3]{Gomez/Lobillo/Navarro/Neri:2018}), 
there is a natural inclusion $\mathcal R \subseteq \mathcal S$. As we have observed in Section \ref{sec:scc}, we get that \(\mathcal{S} \cong \matrixring{n \times n}{K}\) as $K$-algebra.

Let \(\mathcal{C} = \mathcal{R} g\) be an \([n,k]\) skew cyclic code with \(g \mid_r x^n-1\), and \(\widehat{\mathcal{C}} = \mathcal{S}g\). It follows that \(\mathcal{C}\) is a subfield subcode of \(\widehat{\mathcal{C}}\).  Moreover, there exist \(\beta_0, \dots, \beta_{n-k-1} \in E\) such that 
\[
g = \lclm{x-\beta_0, \dots, x-\beta_{n-k-1}},
\]
as explained in Section \ref{sec:scc}.

Given \(\{a_0, \dots, a_{t-1}\} \subseteq E\), define the following $n \times t$ matrix:
$$N(a_0, \dots, a_{t-1}) = \Big( \norm{i}{a_j} \Big)_{\substack{0 \leq i \leq n-1 \\ 0 \leq j \leq t-1}} 
= \left( \begin{matrix}
1 & 1 & \cdots & 1 \\
a_0 & a_1 & \cdots & a_{t-1} \\
\norm{2}{a_0} & \norm{2}{a_1} & \cdots & \norm{2}{a_{t-1}} \\
\vdots & \vdots & \ddots & \vdots \\
\norm{n-1}{a_0} & \norm{n-1}{a_1} & \cdots & \norm{n-1}{a_{t-1}}
\end{matrix}\right).
$$

For any matrix \(M\) we denote by \(\ker(M)\) its left kernel, i.e. \(\ker(M) = \{v \mid vM = 0\}\).

\begin{proposition}\label{prop:paritycheck}
Let $\widehat{\mathcal{C}}\subseteq \mathcal S$ be the $[n,k]$ skew cyclic code generated be 
$g=\lclm{x-\beta_0, \dots, x-\beta_{n-k-1}}$, with $\beta_0,\dots, \beta_{n-k-1}\in E$. Then,  \(\widehat{\mathcal{C}} = \ker \left(N(\beta_0, \dots, \beta_{n-k-1})\right)\), i.e. \(N(\beta_0, \dots, \beta_{n-k-1})\) is a parity check matrix for \(\mathcal{C}\) and \(\widehat{\mathcal{C}}\). 
\end{proposition}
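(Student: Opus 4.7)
The plan is to identify elements of $\mathcal{S}$ with their coefficient vectors in $E^n$, so that $f = \sum_{i=0}^{n-1} v_i x^i \in \mathcal{S}$ corresponds to $v = (v_0,\dots,v_{n-1})$, and then chain three equivalences:
\[
f \in \widehat{\mathcal{C}} \ \Longleftrightarrow\ g \mid_r f \text{ in } S \ \Longleftrightarrow\ f(\beta_j)=0 \text{ for all } j \ \Longleftrightarrow\ v\,N(\beta_0,\dots,\beta_{n-k-1}) = 0.
\]

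First, I would verify the unfolding $f \in \widehat{\mathcal{C}} \Leftrightarrow g \mid_r f$. Since $g \mid_r x^n-1$, the left ideal $S(x^n-1)$ is contained in $Sg$, so $\mathcal{S}g = Sg/S(x^n-1)$. Taking $f$ of degree less than $n$ as the canonical representative, membership in $\mathcal{S}g$ is equivalent to $f \in Sg$, which by right-division in the Euclidean domain $S$ is precisely $g\mid_r f$.

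Next, I would use the definition of the least common left multiple together with the skew remainder theorem. By the former, $g \mid_r f$ iff $(x-\beta_j) \mid_r f$ for every $j=0,\dots,n-k-1$. By Lemma 2.4 of Lam--Leroy (already recalled in the excerpt), writing $f = q(x)(x-\beta_j) + \sum_{i=0}^{n-1} v_i N_i(\beta_j)$ shows that $(x-\beta_j)\mid_r f$ iff $f(\beta_j) = \sum_{i=0}^{n-1} v_i N_i(\beta_j) = 0$.

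Finally, I would read off the matrix equation: the scalar $\sum_{i=0}^{n-1} v_i N_i(\beta_j)$ is exactly the $j$-th entry of the row-vector product $v\,N(\beta_0,\dots,\beta_{n-k-1})$, because the $j$-th column of $N$ is $\bigl(N_0(\beta_j),\dots,N_{n-1}(\beta_j)\bigr)^{\!\top} = \bigl(1,\beta_j,N_2(\beta_j),\dots,N_{n-1}(\beta_j)\bigr)^{\!\top}$. So vanishing of all evaluations $f(\beta_j)$ is the same as $v \in \ker N(\beta_0,\dots,\beta_{n-k-1})$. This yields $\widehat{\mathcal{C}} = \ker N(\beta_0,\dots,\beta_{n-k-1})$; the assertion for $\mathcal{C}$ is immediate from $\mathcal{C} = \widehat{\mathcal{C}} \cap \mathcal{R}$, since the parity checks produced by $N$ cut out $\widehat{\mathcal{C}}$ inside $\mathcal{S}$ and hence cut out $\mathcal{C}$ inside $\mathcal{R}$.

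There is no real obstacle here, only a bookkeeping point worth stating carefully: the passage from the left ideal $\mathcal{S}g$ in the quotient $\mathcal{S}$ to the divisibility relation $g\mid_r f$ in $S$ relies crucially on the hypothesis $g\mid_r x^n-1$, so I would make this explicit to avoid confusion between the ideal generated in $S$ and in $\mathcal{S}$. Once that is pinned down, everything else is a direct application of the evaluation formula and the LCLM characterization already supplied in Section~\ref{sec:scc}.
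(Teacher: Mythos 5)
Your argument is correct and follows essentially the same route as the paper's own proof: membership in $\widehat{\mathcal{C}}$ is reduced to right divisibility by each $x-\beta_j$ via the LCLM characterization, and that divisibility is converted into the vanishing of $\sum_i f_i N_i(\beta_j)$, i.e.\ the left-kernel condition for $N(\beta_0,\dots,\beta_{n-k-1})$. You merely make explicit some bookkeeping (the passage from $\mathcal{S}g$ to $g\mid_r f$ in $S$) that the paper leaves implicit.
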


\begin{proof}
A polynomial \(f = \sum_{i=0}^{n-1} f_i x^i\) is in \(\widehat{\mathcal{C}}\) if and only if \(x - \beta_j \mid_r f\) for all \(0 \leq j \leq n-k-1\). Since \(x - \beta_j \mid_r f\) if and only if \(\sum_{i=0}^{n-1} f_i \norm{i}{\beta_j}\)=0, the result follows. 
\end{proof}


As we pointed out before, \(x - \beta \mid_r x^n-1\) if and only if \(\beta = \theta(\alpha) \alpha^{-1}\). For all \(\alpha \in E\) we use the notation
\[
\alpha^{[\theta]} = \left( \alpha, \theta(\alpha), \ldots, \theta^{n-1}(\alpha) \right)^\top.
\]

\begin{proposition}\label{prop:kerSRS}
Assume \(\beta_i = \theta(\alpha_i) \alpha_i^{-1}\) for each \(0 \leq i \leq n-k-1\) and let $g=\lclm{x-\beta_0, \dots, x-\beta_{n-k-1}}$. Then \[\widehat{\mathcal{C}} = \mathcal{S}g = \ker \left( \alpha_0^{[\theta]} \big| \alpha_1^{[\theta]} \big| \cdots \big| \alpha_{n-k-1}^{[\theta]} \right). \] 
\end{proposition}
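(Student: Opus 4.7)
The approach is to reduce this statement to Proposition \ref{prop:paritycheck} by showing that the matrix $\left(\alpha_0^{[\theta]} \mid \cdots \mid \alpha_{n-k-1}^{[\theta]}\right)$ differs from $N(\beta_0,\dots,\beta_{n-k-1})$ only by right multiplication by an invertible diagonal matrix over $E$, so that the two matrices have the same left kernel.

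The plan is to compute the truncated norm $\norm{j}{\beta_i}$ explicitly using the assumption $\beta_i = \theta(\alpha_i)\alpha_i^{-1}$. By definition,
\[
\norm{j}{\beta_i} \;=\; \prod_{k=0}^{j-1} \theta^k(\beta_i) \;=\; \prod_{k=0}^{j-1} \theta^k\!\bigl(\theta(\alpha_i)\alpha_i^{-1}\bigr) \;=\; \prod_{k=0}^{j-1} \theta^{k+1}(\alpha_i)\,\theta^k(\alpha_i)^{-1},
\]
which telescopes to $\theta^j(\alpha_i)\,\alpha_i^{-1}$. (The $j=0$ case gives the empty product $1$, consistent with $\norm{0}{\beta_i}=1$.) First I would record this telescoping identity; it is the one small computation that makes the proof work.

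Next I would observe that the formula $\norm{j}{\beta_i}=\theta^j(\alpha_i)\,\alpha_i^{-1}$ says precisely that the $i$-th column of $N(\beta_0,\dots,\beta_{n-k-1})$ equals $\alpha_i^{-1}\cdot\alpha_i^{[\theta]}$. Since the $\alpha_i$ come from Hilbert 90 and are nonzero (they are required to be $K$-linearly independent), we may form the invertible diagonal matrix $D = \operatorname{diag}(\alpha_0^{-1},\dots,\alpha_{n-k-1}^{-1}) \in \GL_{n-k}(E)$ and write
\[
N(\beta_0,\dots,\beta_{n-k-1}) \;=\; \bigl( \alpha_0^{[\theta]} \mid \alpha_1^{[\theta]} \mid \cdots \mid \alpha_{n-k-1}^{[\theta]} \bigr)\,D.
\]

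Finally, I would conclude by invoking Proposition \ref{prop:paritycheck}: the left kernel of a matrix is unchanged under right multiplication by an invertible matrix, so
\[
\widehat{\mathcal{C}} \;=\; \ker\bigl(N(\beta_0,\dots,\beta_{n-k-1})\bigr) \;=\; \ker\bigl(\alpha_0^{[\theta]} \mid \cdots \mid \alpha_{n-k-1}^{[\theta]}\bigr),
\]
which is the desired identity. There is no real obstacle here beyond correctly unwinding the definition of $N_j$; the main content of the statement lies in the prior proposition and in Hilbert 90, both of which have already been cited.
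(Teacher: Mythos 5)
Your proposal is correct and follows essentially the same route as the paper: both establish $\norm{j}{\beta_i}=\theta^j(\alpha_i)\alpha_i^{-1}$, factor $N(\beta_0,\dots,\beta_{n-k-1})$ as $\bigl( \alpha_0^{[\theta]} \mid \cdots \mid \alpha_{n-k-1}^{[\theta]} \bigr)$ times the invertible diagonal matrix of the $\alpha_i^{-1}$, and conclude that the left kernels coincide via Proposition \ref{prop:paritycheck}. Your explicit telescoping computation is a small addition the paper leaves implicit, but the argument is identical.
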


\begin{proof}
Since \(\norm{i}{\beta_j} = \theta^{i}(\alpha_j) \alpha_j^{-1}\), it follows that
\(
\left( 1, \beta_j, \norm{2}{\beta_j}, \cdots, \norm{n-1}{\beta_j} \right)^\top = \alpha_j^{[\theta]} \alpha_j^{-1},
\)
hence
$$N(\beta_0, \dots, \beta_{n-k-1}) = 
\left( \alpha_0^{[\theta]} \big| \alpha_1^{[\theta]} \big| \cdots \big| \alpha_{n-k-1}^{[\theta]} \right) \left(\begin{smallmatrix} \alpha_0^{-1} & & & \\ & \alpha_1^{-1} & & \\ & & \ddots & \\ & & & \alpha_{n-k-1}^{-1}\end{smallmatrix}\right)
$$
and \[\ker \left(N(\beta_0, \dots, \beta_{n-k-1})\right) = \ker \left( \alpha_0^{[\theta]} \big| \alpha_1^{[\theta]} \big| \cdots \big| \alpha_{n-k-1}^{[\theta]} \right) \] as desired. 
\end{proof}

From Proposition \ref{prop:kerSRS} it immediately follows the next result. 

\begin{corollary}\label{cor:skewRSareGab}
Let $\alpha \in E$ such that $\{ \alpha, \theta(\alpha),\ldots,\theta^{n-1}(\alpha)\}$ is a normal basis and let $\beta=\theta(\alpha)\alpha^{-1}$. Let moreover $\delta$ be an integer such that $1 \leq \delta \leq n$. Then
\[\operatorname{sRS}^\theta_{\beta}(n,\delta)=\G_{\delta-1,\theta}(\alpha^{[\theta]})^\perp.\]
In particular, skew Reed-Solomon codes are MRD codes of dimension $n-\delta+1$ and minimum rank distance $\delta$.
\end{corollary}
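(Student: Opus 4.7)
The plan is to unpack the definitions on both sides of the claimed equality and then match the parity-check matrix produced by Proposition~\ref{prop:kerSRS} with the transpose of a generator matrix of the Gabidulin code. Set $\alpha_i := \theta^i(\alpha)$ for $0 \le i \le \delta-2$. The key algebraic observation is that
\[
\theta^i(\beta) = \theta^i\!\left(\theta(\alpha)\alpha^{-1}\right) = \theta^{i+1}(\alpha)\,\theta^i(\alpha)^{-1} = \theta(\alpha_i)\,\alpha_i^{-1},
\]
so the generators of $\operatorname{sRS}^\theta_{\beta}(n,\delta)$ fit exactly the hypothesis of Proposition~\ref{prop:kerSRS}. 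Applying that proposition,
\[
\operatorname{sRS}^\theta_{\beta}(n,\delta) \;=\; \ker\!\left(\alpha_0^{[\theta]} \,\big|\, \alpha_1^{[\theta]} \,\big|\, \cdots \,\big|\, \alpha_{\delta-2}^{[\theta]}\right).
\]

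Next I would compare this parity-check matrix with the Gabidulin side. Entrywise, the $n \times (\delta-1)$ matrix above has $(j,i)$-entry $\theta^{j}(\alpha_i) = \theta^{i+j}(\alpha)$. On the other hand, $\G_{\delta-1,\theta}(\alpha^{[\theta]})$ has as rows the vectors $\theta^i(\alpha^{[\theta]}) = (\theta^{i}(\alpha),\theta^{i+1}(\alpha),\ldots,\theta^{i+n-1}(\alpha))$ for $0 \le i \le \delta-2$, so its natural generator matrix $G$ has $(i,j)$-entry $\theta^{i+j}(\alpha)$. Thus the parity-check matrix of the skew Reed-Solomon code is precisely $G^\top$, which exhibits $\operatorname{sRS}^\theta_{\beta}(n,\delta)$ as the dual of the Gabidulin code in the standard inner product, yielding the claimed identity.

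For the \emph{in particular} part, I would first verify that $\G_{\delta-1,\theta}(\alpha^{[\theta]})$ is a well-defined Gabidulin code of dimension $\delta-1$: the normal basis hypothesis gives $\weight_R(\alpha^{[\theta]}) = \dim_K\langle \alpha,\theta(\alpha),\ldots,\theta^{n-1}(\alpha)\rangle_K = n$, and the $E$-linear independence of the rows of $G$ can be read off from the Circulant Lemma (Lemma~\ref{circulantlemma}) applied to the first $\delta-1$ columns, which form a nonzero determinant since $\{\alpha_0,\ldots,\alpha_{\delta-2}\}$ is part of a normal basis, hence $K$-linearly independent. So $\operatorname{sRS}^\theta_{\beta}(n,\delta)$ is the dual of a $(\delta-1)$-dimensional code and therefore has dimension $n-\delta+1$. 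Finally, invoking the two facts about Gabidulin codes recalled in Section~\ref{sec:rankmetric}, namely that they are MRD and closed under duality with respect to the standard inner product, yields that $\operatorname{sRS}^\theta_{\beta}(n,\delta)$ is MRD, and the Singleton-type bound forces its minimum rank distance to equal $n-(n-\delta+1)+1 = \delta$.

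There is no real obstacle here: once the substitution $\alpha_i = \theta^i(\alpha)$ is made, everything reduces to matching a matrix with its transpose. The only subtle point worth being careful about is the duality convention, namely that the parity-check matrix $H$ of a code coincides with the transpose of a generator matrix of its dual, so that $\ker(H) = (\ker(H^\top)^\perp)^\perp$ unwinds cleanly; but this is formal and no computational difficulty arises.
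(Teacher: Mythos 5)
Your proof is correct and follows exactly the route the paper intends: the paper gives no explicit proof, stating only that the corollary ``immediately follows'' from Proposition~\ref{prop:kerSRS}, and your argument is precisely the expansion of that deduction (identify the parity-check matrix from Proposition~\ref{prop:kerSRS} with the transpose of the Gabidulin generator matrix, then invoke the MRD and duality-closure properties recalled in Section~\ref{sec:rankmetric}). The only cosmetic quibble is the slightly convoluted phrasing of the duality convention at the end; the direct statement that the left kernel of $G^\top$ is the orthogonal complement of the row space of $G$ suffices.
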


Let \(\alpha \in E\) such that \(\{\alpha, \theta(\alpha), \dots, \theta^{n-1}(\alpha)\}\) is a \(K\)- basis. Let \(\beta = \theta(\alpha) \alpha^{-1}\). It is well known that
\[
x^n-1 = \lclm{\{x - \theta^i(\beta) ~|~ 0 \leq i \leq n-1\}},
\]
see e.g. \cite[Theorem 5.3]{Delenclos/Leroy:2007} 

\begin{definition}\label{def:Tbetag}
Let \(g \in R\) such that \(g \mid_r x^n-1\), \(\mathcal{C} = \mathcal{R}g\) and \(\widehat{\mathcal{C}} = \mathcal{S} g\). The \emph{\(\beta\)-defining set of \(g\)} is 
\[
T_\beta(g) = \left\{ 0 \leq i \leq n-1 ~\big|~ x-\theta^i(\beta) \mid_r g \right\}.
\]
In particular, 
\(\lclm{\{x - \theta^i(\beta) ~|~ i \in T_\beta(g)\}} \mid_r g\).
\end{definition}


\section{Skew Roos bound for the Hamming metric}\label{sec:RboundHamming}

In this section, we will keep the notation of Definition \ref{def:Tbetag}. Hence we will write $\C$ for the  skew cyclic code $\C= \mathcal{R}g$, where $g \in R$ is such that $g \mid_r x^n-1$, and \(\widehat{\mathcal{C}} = \mathcal{S} g\). 

\begin{lemma}\label{chainofsubspaces}
Let \(\alpha_1, \dots, \alpha_{t+r} \in E\) be linear independent elements over \(K\). Let \(\{k_0, \dots, k_r\} \subseteq \{0, \dots, n-1\}\) be such that \(k_r - k_0 \leq t+r-1\) and \(k_{j-1} < k_{j}\) for \(1 \leq j \leq r\). Let 
\[
A_0 = \begin{pmatrix}
\theta^{k_0}(\alpha_1) & \theta^{k_1}(\alpha_1) & \cdots & \theta^{k_r}(\alpha_1) \\
\vdots & \vdots & \ddots & \vdots \\
\theta^{k_0}(\alpha_{t+r}) & \theta^{k_1}(\alpha_{t+r}) & \cdots & \theta^{k_r}(\alpha_{t+r})
\end{pmatrix}.
\]
Let \(s \in \{0, \dots, n-1\}\) such that \((s,n) = 1\) and 
\[
A_i = \left(\begin{array}{c|c|c|c} A_0 & \theta^s(A_0) & \dots & \theta^{si}(A_0) \end{array} \right).
\]
Then \(\rank(A_{t-1}) = t + r\). 
\end{lemma}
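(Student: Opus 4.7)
The plan is to proceed by induction on $r$, using the Circulant Lemma twice -- first for the generator $\theta^s$ of $\Gal(E/K)$ and later for $\theta$ itself.

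\emph{Base case $r=0$.} The matrix $A_{t-1}$ is $t\times t$ with $(\ell,i)$-entry $\theta^{k_0+si}(\alpha_\ell)=\theta^{k_0}((\theta^s)^{i}(\alpha_\ell))$. Since $(s,n)=1$, $\theta^s$ also generates $\Gal(E/K)$, so after extending $\alpha_1,\ldots,\alpha_t$ to a $K$-basis of $E$ the Circulant Lemma applied to $\theta^s$ gives $\rank(A_{t-1})=t=t+r$.

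\emph{Inductive step, derivation of $L'$.} Assume the statement for $r-1$ and every $t$, and suppose for contradiction that $\rank(A_{t-1})<t+r$. Pick a nonzero $c\in E^{t+r}$ with $cA_{t-1}=0$. Applying $\theta^{-k_j}$ to the block of relations indexed by $k_j$ shows that each $c^{(j)}:=(\theta^{-k_j}(c_\ell))_\ell$ lies in the left kernel $W^\perp\subseteq E^{t+r}$ of $U:=(\theta^{si}(\alpha_\ell))_{\ell,i}$. By the Circulant Lemma applied to $\theta^s$, $\rank U=t$ and hence $\dim_E W^\perp=r$; so the $r+1$ vectors $c^{(0)},\ldots,c^{(r)}$ are $E$-linearly dependent, producing $d_0,\ldots,d_r\in E$, not all zero, with $\sum_j d_j\theta^{-k_j}(c_\ell)=0$ for every $\ell$. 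Left-multiplying by $\theta^{k_r}$ rewrites this as $L'(c_\ell)=0$, where $L':=\sum_{i=0}^{m}\tilde e_i\theta^i$ is a nonzero $E$-combination of $\theta^0,\ldots,\theta^m$ with $m:=k_r-k_0\leq t+r-1$ (nonzero by Dedekind independence of the $\theta^i$).

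\emph{Rank reduction and closing the induction.} If some $m+1$ of the $c_\ell$'s, say $c_{\ell_1},\ldots,c_{\ell_{m+1}}$, were $K$-linearly independent, the $(m+1)\times(m+1)$ matrix $(\theta^i(c_{\ell_s}))_{0\leq i\leq m,\,1\leq s\leq m+1}$ would be singular (annihilated on the left by $(\tilde e_0,\ldots,\tilde e_m)\neq 0$), contradicting a second application of the Circulant Lemma -- this time to $\theta$ with consecutive exponents $0,1,\ldots,m$. Hence the $c_\ell$'s lie in a $K$-subspace of dimension at most $m\leq t+r-1$ and satisfy a nontrivial relation $\sum_\ell\lambda_\ell c_\ell=0$ with $\lambda\in K^{t+r}\setminus\{0\}$. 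Assuming $\lambda_1\neq 0$, set $\alpha_\ell':=\alpha_\ell-(\lambda_\ell/\lambda_1)\alpha_1$ for $\ell=2,\ldots,t+r$; these remain $K$-linearly independent. Since each $\lambda_\ell\in K$ is $\theta$-fixed, eliminating $c_1$ in $cA_{t-1}=0$ yields $\sum_{\ell\geq 2}c_\ell\theta^{k_j+si}(\alpha_\ell')=0$ for all $(i,j)$. The submatrix obtained by restricting to the columns with $0\leq j\leq r-1$ is precisely the matrix of the $(t,r-1)$-instance of the lemma -- the strict inequality $k_{r-1}<k_r$ gives $k_{r-1}-k_0\leq(k_r-k_0)-1\leq t+r-2$ for free -- so by the inductive hypothesis it has full row rank $t+r-1$. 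This forces $c_2=\cdots=c_{t+r}=0$, and then $\lambda_1c_1=0$ gives $c_1=0$, contradicting $c\neq 0$.

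The step I expect to require the most care is the rank-reduction bridge from the $E$-linear dependence of the $c^{(j)}$'s to the $K$-linear dependence of the $c_\ell$'s (the ``second Circulant'' argument above); it is what allows the induction to pull the hypothesis $k_r-k_0\leq t+r-1$ down to the $k_{r-1}-k_0\leq t+r-2$ needed at the next level.
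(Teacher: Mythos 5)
Your proof is correct, and it takes a genuinely different route from the paper's. The paper argues directly with the ascending chain of column spaces $\mathcal{A}_0 \subseteq \mathcal{A}_1 \subseteq \dots \subseteq \mathcal{A}_{t-1}$ of the $A_i$: the hypothesis $k_r - k_0 \leq t+r-1$ embeds the columns of $A_0$ into a nonsingular circulant with consecutive exponents, giving $\dim \mathcal{A}_0 = r+1$; if $\dim\mathcal{A}_{t-1} < t+r$ the chain must stabilize at some $\mathcal{A}_j = \mathcal{A}_{j+1}$, and since $\mathcal{A}_{i+1} = \mathcal{A}_i + \theta^s(\mathcal{A}_i)$ the stabilized space is $\theta^s$-invariant, hence contains the columns of the full circulant $\bigl(\theta^{k_0+si'}(\alpha_\ell)\bigr)_{0 \leq i' \leq t+r-1}$, nonsingular by the Circulant Lemma applied to $\theta^s$ --- a contradiction obtained in a few lines with a single non-inductive idea. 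You instead induct on $r$ and work with the left kernel: the Circulant Lemma for $\theta^s$ bounds $\dim_E W^\perp$ by $r$ and yields the relation $L'(c_\ell)=0$ with $L'$ supported on $\theta^0,\dots,\theta^m$, $m = k_r-k_0$; a second Circulant application (for $\theta$, consecutive exponents $0,\dots,m$, legitimate since $m+1 \leq t+r \leq n$) converts this into a $K$-linear dependence among the $c_\ell$; and elimination on the $\alpha_\ell$ drops to the $(t,r-1)$ instance, whose hypotheses you correctly verify. I checked each step and the argument closes. What your version buys is transparency about where each hypothesis enters --- in particular $k_r-k_0 \leq t+r-1$ appears as the degree bound on $L'$ rather than as the column-embedding the paper uses --- at the cost of being noticeably longer. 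Two cosmetic points: ``assuming $\lambda_1 \neq 0$'' should be justified by reindexing the (interchangeable) $\alpha_\ell$; and the appeal to Dedekind independence is superfluous, since the second Circulant application only needs the coefficient vector $(\tilde e_0,\dots,\tilde e_m)$ to be nonzero, which is immediate from $d \neq 0$ and the distinctness of the $k_j$.
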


\begin{proof}
Let \(\mathcal{A}_i \subseteq E^{t+r}\) be the column space of \(A_i\), so \(\dim(\mathcal{A}_i) = \rank(A_i)\). Observe that 
\[
\{k_0, \dots, k_r\} \subseteq \{k_0, k_0 + 1, \dots, k_0 + t + r - 1\},
\]
hence \(A_0\) is obtained from 
$$A = 
\begin{pmatrix}
\theta^{k_0}(\alpha_1) & \theta^{k_0+1}(\alpha_1) & \cdots & \theta^{k_0+t+r-1}(\alpha_1) \\
\vdots & \vdots & \ddots & \vdots \\
\theta^{k_0}(\alpha_{t+r}) & \theta^{k_0+1}(\alpha_{t+r}) & \cdots & \theta^{k_0+t+r-1}(\alpha_{t+r})
\end{pmatrix}
$$
deleting some columns. By the Circulant Lemma (Lemma \ref{circulantlemma}), \(\rank(A) = t+r\), hence \(\dim(\mathcal{A}_0) = \rank(A_0) = r+1\). Assume by contradiction that \(\dim(\mathcal{A}_{t-1}) < t+r\). Since \(\mathcal{A}_{i} \subseteq \mathcal{A}_{i+1}\) for \(0 \leq i \leq t-2\), it follows that there exists \(0 \leq j \leq t-2\) such that \(\dim(\mathcal{A}_j) = \dim(\mathcal{A}_{j+1})\), i.e. \(\mathcal{A}_j = \mathcal{A}_{j+1}\). Since \(\mathcal{A}_{i+1} = \mathcal{A}_i + \theta^s(\mathcal{A}_i)\) for \(0 \leq i \leq t-2\), it follows that \(\mathcal{A}_j = \theta^s(\mathcal{A}_j)\), i.e. \(\mathcal{A}_j\) is invariant under the action of \(\theta^s\). Hence \(\mathcal{A}_j \supseteq \mathcal{A}_0 + \theta^s(\mathcal{A}_0) + \dots + \theta^{s(t+r-1)}(\mathcal{A}_0)\). In particular \(\mathcal{A}_j\) contains the columns of 
\[
A' = 
\begin{pmatrix}
\theta^{k_0}(\alpha_1) & \theta^{k_0+s}(\alpha_1) & \cdots & \theta^{k_0+s(t+r-1)}(\alpha_1) \\
\vdots & \vdots & \ddots & \vdots \\
\theta^{k_0}(\alpha_{t+r}) & \theta^{k_0+s}(\alpha_{t+r}) & \cdots & \theta^{k_0+s(t+r-1)}(\alpha_{t+r})
\end{pmatrix}.
\]
Since \((s,n) = 1\), \(K = E^\theta = E^{\theta^s}\), so again by Lemma \ref{circulantlemma}, \(\det(A') \neq 0\), and therefore \(\dim(\mathcal{A}_j) \geq t+r\). Finally \(\mathcal{A}_{t-1} \supseteq \mathcal{A}_{j}\), so we get \(t+r > \dim(\mathcal{A}_{t-1}) \geq t+r\), that is a contradiction. 
\end{proof}

\begin{theorem}[Skew Roos bound for the Hamming metric]\label{roosboundgeneral}
Let $\C$ be the skew cyclic code $\C= \mathcal{R}g$, where $g \in R$ and \(\widehat{\mathcal{C}} = \mathcal{S} g\). Let \(\alpha \in E\) such that \(\{\alpha, \theta(\alpha), \dots, \theta^{n-1}(\alpha)\}\) is a \(K\)-basis and let \(\beta = \theta(\alpha) \alpha^{-1}\). Moreover, assume that there are \(b, s, \delta, k_0, \dots, k_r\) such that, \((s,n) = 1\),  \(k_j < k_{j+1}\) for $0\leq j\leq r-1$, \(k_r - k_0 \leq \delta + r - 2\), and \(b + si + k_j \in T_\beta(g)\) for all \(0 \leq i \leq \delta-2\) and \(0 \leq j \leq r\). Then \(\distance_H(\mathcal{C}) \geq \distance_H(\widehat{\mathcal{C}}) \geq \delta + r\).
\end{theorem}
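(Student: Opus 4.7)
The plan is to first observe that $\C \subseteq \widehat{\C}$, so $\distance_H(\C) \geq \distance_H(\widehat{\C})$, and then prove the lower bound for $\widehat{\C}$ by a parity-check/rank argument built on Lemma \ref{chainofsubspaces}. For each $\ell \in T_\beta(g)$, $x - \theta^\ell(\beta) \mid_r g$, and since $\theta^\ell(\beta) = \theta(\theta^\ell(\alpha)) \cdot \theta^\ell(\alpha)^{-1}$, Proposition \ref{prop:kerSRS} gives that every $c = (c_0, \ldots, c_{n-1}) \in \widehat{\C}$ satisfies $c \cdot \theta^\ell(\alpha)^{[\theta]} = 0$, that is,
$$\sum_{m=0}^{n-1} c_m \theta^{m+\ell}(\alpha) = 0.$$
Specializing to $\ell = b + si + k_j$ for $0 \leq i \leq \delta-2$ and $0 \leq j \leq r$ produces $(\delta-1)(r+1)$ such equations.

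Suppose, for contradiction, that some $c \in \widehat{\C} \setminus \{0\}$ has Hamming weight $w \leq \delta + r - 1$. Let $J = \operatorname{supp}(c)$ and enlarge it to a set $J' = \{m_1, \ldots, m_{\delta+r-1}\} \subseteq \{0, \ldots, n-1\}$ of size exactly $\delta + r - 1$ by adding arbitrary indices outside $J$ and setting the corresponding coordinates to zero. Put $\gamma_p = \theta^{m_p}(\alpha)$; the $\gamma_p$ are $K$-linearly independent since they are drawn from the normal basis $\{\alpha, \theta(\alpha), \ldots, \theta^{n-1}(\alpha)\}$. The equations above then read
$$(c_{m_1}, \ldots, c_{m_{\delta+r-1}}) \cdot \theta^b(A_{\delta-2}) = 0,$$
where $A_{\delta-2}$ is the block matrix of Lemma \ref{chainofsubspaces} formed with $t = \delta - 1$, the shift $s$, the given $k_0 < \cdots < k_r$, and with $\alpha_p$ replaced by $\gamma_p$; here $\theta^b$ is applied entrywise, which preserves the $E$-rank.

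All hypotheses of Lemma \ref{chainofsubspaces} are in force: $(s,n) = 1$, $k_r - k_0 \leq \delta + r - 2 = t + r - 1$, and the $\gamma_p$ are $K$-linearly independent. The lemma yields $\rank(A_{\delta-2}) = \delta + r - 1$, so $\theta^b(A_{\delta-2})$ has full row rank and its only left-null vector is zero, forcing $c_{m_p} = 0$ for every $p$ and contradicting $c \neq 0$. The crux of the argument, and the main obstacle, is recognizing that the parity-check submatrix indexed by the grid $\{b + si + k_j\}$ has exactly the block shape of $A_{t-1}$ in Lemma \ref{chainofsubspaces}: varying $i$ realizes the $\theta^s$-shift between successive blocks, while varying $j$ indexes the columns within a single block $(\theta^{k_j}(\gamma_p))_{p,j}$. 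Once this identification is made, plus the harmless $\theta^b$ conjugation and the padding step that promotes $w$ to exactly $\delta + r - 1$, the lemma supplies precisely the rank needed to conclude.
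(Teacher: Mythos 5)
Your proposal is correct and follows essentially the same route as the paper: extract the parity-check equations $\sum_m c_m\theta^{m+\ell}(\alpha)=0$ for $\ell=b+si+k_j$ (you via Proposition \ref{prop:kerSRS}, the paper via the truncated-norm identity $\norm{l_h}{\theta^{\ell}(\beta)}=\theta^{\ell+l_h}(\alpha)\theta^{\ell}(\alpha)^{-1}$, which is the same computation), pad the support to exactly $\delta+r-1$ positions so the rows come from $K$-linearly independent normal-basis elements, identify the resulting submatrix with $\theta^b(A_{t-1})$ for $t=\delta-1$, and invoke Lemma \ref{chainofsubspaces} to get full row rank and hence a trivial left kernel.
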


\begin{proof}
The bound \(\distance_H(\mathcal{C}) \geq \distance_H(\widehat{\mathcal{C}})\) follows since \(\mathcal{C}\) is a subfield subcode of \(\widehat{\mathcal{C}}\). 
Let \(w = \delta+r-1\) and let \(c \in \widehat{\mathcal{C}} = \mathcal{S}g\) such that \(\weight(c) \leq w\), i.e. \(c = \sum_{h=1}^w c_h x^{l_h}\) for suitable \(\{l_1, \dots, l_w\} \subseteq \{0, \dots, n-1\}\). For each \(0 \leq i \leq \delta - 2\) and \(0 \leq j \leq r\), \(x - \theta^{b + si + k_j}(\beta) \mid_r c\), so 
\[
\begin{split}
0 &= \textstyle\sum_{h=1}^w c_h \norm{l_h}{\theta^{b+si+k_j}(\beta)} \\
&= \theta^{b+si+k_j}(\alpha)^{-1} \textstyle\sum_{h=1}^w c_h {\theta^{b+si+k_j+l_h}(\alpha)}.
\end{split} 
\]
We get that \(\bar{c}:=(c_1,\ldots,c_w)\) is in the left kernel of the matrix \(\theta^{b}(B)\) where
\[
B=\left(\begin{array}{c|c|c|c}
A_0 & \theta^{s}(A_0) & \cdots & \theta^{s(\delta-2)}(A_0)
\end{array}\right)
\]
and 
\[
A_0 = \Big( \theta^{k_j+l_h}(\alpha) \Big)_{\substack{1 \leq h \leq w \\ 0 \leq j \leq r}}.
\]
Applying Lemma \ref{chainofsubspaces} with \(t = \delta-1\), we get that \(\rank(B)=w\). Hence, $\bar{c}=0$ and, so, $c=0$ is the only element in \(\mathcal{S} g\) of weight at most \(\delta+r-1\).
\end{proof}


If \(s = 1\) we obtain an instance of the BCH bound, hence we can decode. 

\begin{proposition}\label{roosbound}
Let $\C$ be the skew cyclic code $\C= \mathcal{R}g$, where $g \in R$ and \(\widehat{\mathcal{C}} = \mathcal{S} g\). Let \(\alpha \in E\) such that \(\{\alpha, \theta(\alpha), \dots, \theta^{n-1}(\alpha)\}\) is a \(K\)-basis and let \(\beta = \theta(\alpha) \alpha^{-1}\). Moreover, assume that there are \(b, \delta, k_0, \dots, k_r\) such that \(k_j < k_{j+1}\) for $0\leq j\leq r-1$, \(k_r - k_0 \leq \delta + r - 2\), and \(b + i + k_j \in T_\beta(g)\) for all \(0 \leq i \leq \delta-2\) and \(0 \leq j \leq r\). Then \(\operatorname{sRS}^\theta_{\theta^{b+k_0}(\beta)}(n,\delta+r) \supseteq \widehat{\mathcal{C}}\). In particular \(\distance_H(\mathcal{C}) \geq \delta + r\) and \(\distance_R(\mathcal{C}) \geq \delta + r\). 
\end{proposition}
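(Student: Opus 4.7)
The plan is to show that in this $s=1$ setting the two-parameter hypothesis on $T_\beta(g)$ collapses into an ordinary BCH-type hypothesis, producing $\delta+r-1$ \emph{consecutive} exponents inside $T_\beta(g)$; once this is in hand, Proposition~\ref{prop:skewBCHbound} and Corollary~\ref{cor:skewRSareGab} deliver both the sRS containment and the two distance bounds almost automatically. In particular, there is no need to reinvoke the matrix-rank argument of Lemma~\ref{chainofsubspaces}.

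The combinatorial heart of the argument is a gap estimate for $k_0 < k_1 < \cdots < k_r$. The $r$ gaps $k_{j+1}-k_j$ are positive integers summing to $k_r-k_0\le \delta+r-2$, so any single one satisfies
\[
k_{j+1}-k_j \;\le\; (\delta+r-2)-(r-1) \;=\; \delta-1.
\]
Consequently the intervals $[k_j,\,k_j+\delta-2]$ for $j=0,\dots,r$ overlap or touch, and their union is the single interval $[k_0,\,k_r+\delta-2]$. The strict monotonicity of the $k_j$ also gives $k_r\ge k_0+r$, so this union already contains the block $[k_0,\,k_0+\delta+r-2]$ of $\delta+r-1$ consecutive integers.

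Applying the hypothesis $b+i+k_j\in T_\beta(g)$ to any pair $(i,j)$ with $i+k_j=k_0+m$ (which exists for every $m=0,1,\dots,\delta+r-2$ by the previous paragraph) yields $b+k_0+m\in T_\beta(g)$, i.e.\ $x-\theta^{b+k_0+m}(\beta)\mid_r g$. The least common left multiple of these $\delta+r-1$ linear skew factors is therefore itself a right divisor of $g$; by definition that least common left multiple generates $\operatorname{sRS}^\theta_{\theta^{b+k_0}(\beta)}(n,\delta+r)$, and the standard correspondence between principal left ideals and right divisors in $\mathcal{S}$ gives $\widehat{\mathcal{C}}=\mathcal{S}g\subseteq \operatorname{sRS}^\theta_{\theta^{b+k_0}(\beta)}(n,\delta+r)$.

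Since sRS codes are MDS (Proposition~\ref{prop:skewBCHbound}) and MRD (Corollary~\ref{cor:skewRSareGab}), and both the Hamming and the rank metric are monotone under subcode inclusion, we conclude $\distance_H(\mathcal{C})\ge\distance_H(\widehat{\mathcal{C}})\ge\delta+r$ and $\distance_R(\mathcal{C})\ge\distance_R(\widehat{\mathcal{C}})\ge\delta+r$. The only nontrivial step is the gap estimate in the second paragraph; everything else is mechanical, which is precisely why the $s=1$ case of the skew Roos bound reduces to a decodable BCH-style bound.
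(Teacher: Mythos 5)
Your proof is correct and follows essentially the same route as the paper's: the same gap estimate $k_{j+1}-k_j\le\delta-1$ (argued directly rather than by contradiction), the same conclusion that $[k_0,k_0+\delta+r-2]\cap\Zset\subseteq T_\beta(g)-b$, and the same appeal to right divisibility of $g$ by the resulting $\lclm{\cdot}$ together with Proposition~\ref{prop:skewBCHbound} and Corollary~\ref{cor:skewRSareGab}. The only cosmetic difference is that the paper first normalizes to $b=0$ by replacing $\beta$ with $\theta^b(\beta)$, while you carry $b$ through explicitly.
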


\begin{proof}
Up to replacing \(\beta\) by \(\theta^{b}(\beta)\), we may assume \(b = 0\). Since \(k_j < k_{j+1}\) for all \(j\), it follows that \(k_l \geq k_j + (l-j)\) for all \(j \leq l\). Assume, for a contradiction, that \(k_j + \delta - 1 < k_{j+1}\). Then
\[
k_0 + j + \delta - 1 \leq k_j + \delta - 1 < k_{j+1} \leq k_r - (r - j - 1),
\]
and consequently
\[
\delta + r - 2 = (j + \delta - 1) + (r - j - 1) < k_r - k_0,
\]
which is incompatible with the hypothesis \(k_r - k_0 \leq \delta + r - 2\). Therefore \(k_{j+1} \leq k_j + \delta - 1\) and 
\[
\bigcup_{j=0}^r \{k_j + i ~|~ 0 \leq i \leq \delta-2\} = [k_0, k_r + \delta - 2] \cap \Zset. 
\]
Since \(k_r \geq k_0 + r\), it follows that
\[
[k_0, k_0 + \delta + r - 2] \cap \Zset \subseteq T_{\beta}(g),
\]
so, if \(f = \lclm{\{x - \theta^{k_0+i}(\beta) ~|~ 0 \leq i \leq \delta + r - 2\}}\), we have that
\[
f \mid_r g.
\]
This implies that \(\mathcal{S}f \supseteq \mathcal{S}g\). Since \(\mathcal{S}f = \operatorname{sRS}^\theta_{\theta^{k_0}(\beta)}(n,\delta+r)\) and \(\mathcal{S}g = \widehat{\mathcal{C}} \supseteq \mathcal{C}\), the result follows using Proposition \ref{prop:skewBCHbound} and Corollary \ref{cor:skewRSareGab}. 
\end{proof}

\begin{corollary}
Assume that there are \(b, \delta, s, k_0, \dots, k_r\) such that \(k_j < k_{j+1}\) for $0\leq j \leq r-1$, \(k_r - k_0 \leq \delta + r - 2\), \((s,n) = 1\), and \(b + is + k_js \in T_\beta(g)\) for all \(0 \leq i \leq \delta-2\) and \(0 \leq j \leq r\). Then there exists $b^\prime$ such that \(\operatorname{sRS}^{\theta^s}_{\theta^{b'+k_0s}(\beta)}(n,\delta+r) \supseteq \widehat{\mathcal{C}}\). In particular \(\distance_R(\mathcal{C}) \geq \delta + r\) and  \(\distance_H(\mathcal{C}) \geq \delta + r\). 
\end{corollary}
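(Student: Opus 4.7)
The plan is to reduce the Corollary to Proposition~\ref{roosbound} applied with the Galois generator $\tau := \theta^s$ in place of $\theta$.  Since $(s,n)=1$, the automorphism $\tau$ is also a generator of $\Gal(E/K)$ of order $n$, and the set $\{\alpha,\tau(\alpha),\dots,\tau^{n-1}(\alpha)\}$ is merely a reordering of $\{\alpha,\theta(\alpha),\dots,\theta^{n-1}(\alpha)\}$ (via the permutation $i\mapsto si \bmod n$), so $\alpha$ is still a normal element with respect to $\tau$.  Setting $\tilde\beta := \tau(\alpha)\alpha^{-1}$, one thereby obtains all the ingredients needed to invoke Proposition~\ref{roosbound} in the $\tau$-setting.

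The second step is the combinatorial reduction already present inside the proof of Proposition~\ref{roosbound}.  From $k_j < k_{j+1}$ and $k_r-k_0 \le \delta+r-2$ one deduces $k_{j+1} \le k_j + \delta - 1$ for every $j$, whence
\[
\bigcup_{j=0}^{r} \{k_j,k_j+1,\dots,k_j+\delta-2\} \;\supseteq\; \{k_0,k_0+1,\dots,k_0+\delta+r-2\}.
\]
Combining this with the hypothesis $b+is+k_js \in T_\beta(g)$ and the substitution $m = i+k_j$, one concludes that the full arithmetic progression $\{b+ms : k_0 \le m \le k_0+\delta+r-2\}$ of common difference $s$ lies in $T_\beta(g)$.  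Equivalently, writing $B := b+k_0 s$ and $\gamma := \theta^B(\beta)$, one has $\theta^{B+ls}(\beta)=\tau^l(\gamma)$, so $x-\tau^l(\gamma)\mid_r g$ in $E[x;\theta]$ for $0\le l\le \delta+r-2$.

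Third, I would translate this into the $\tau$-setting.  Since $N^\tau_n(\gamma) = N^\theta_n(\gamma) = 1$ (the equality uses $(s,n)=1$ to reindex the product), Hilbert's Theorem~90 applied to $\tau$ produces an element $\alpha^* \in E$, unique up to a factor in $K^\times$, with $\tau(\alpha^*)\alpha^{*-1}=\gamma$; the freedom in this factor plus the freedom in the choice of $b'$ is what the ``there exists $b'$'' of the statement is meant to absorb.  With this identification, the $\delta+r-1$ consecutive $\tau$-shifts of $\gamma$ form a $\tau$-BCH-type progression in the $\tilde\beta$-defining set of $g$ (viewed inside $E[x;\tau]$), so that Proposition~\ref{roosbound} applied to $\tau$ yields $\operatorname{sRS}^{\theta^s}_{\theta^{b'+k_0 s}(\beta)}(n,\delta+r) \supseteq \widehat{\mathcal C}$.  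The ``in particular'' then follows: by Proposition~\ref{prop:skewBCHbound} and Corollary~\ref{cor:skewRSareGab} applied to $\tau$, the containing skew Reed–Solomon code is simultaneously MDS and MRD of minimum distance $\delta+r$, and both bounds are inherited by the subcode $\mathcal C \subseteq \widehat{\mathcal C}$.

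The step I expect to be the main obstacle is exactly the translation between the two skew polynomial ring structures on the common underlying space $E^n$: the conditions ``$x-\tau^l(\gamma)\mid_r g$ in $E[x;\theta]$'' and ``$x-\tau^l(\gamma)\mid_r g$ in $E[x;\tau]$'' involve truncated norms with respect to \emph{different} automorphisms and so are not literally equivalent in general.  Making the identification $\gamma = \tau^{b'+k_0}(\tilde\beta)$ and verifying that the arithmetic-progression hypothesis in the $\theta$-setting does convert into the consecutive-positions hypothesis in the $\tau$-setting needed by Proposition~\ref{roosbound} requires the careful Hilbert~90 adjustment described above, and it is the only place in the argument where the coprimality $(s,n)=1$ is used in an essential, non-combinatorial way.
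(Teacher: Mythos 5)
Your overall strategy coincides with the paper's: the entire published proof is the single line ``Apply Proposition~\ref{roosbound} to $\theta^s$ and $b'=bu$ where $us+vn=1$'', and you have correctly reconstructed the two ingredients of that reduction, namely that $\theta^s$ is again a generator of $\Gal(E/K)$ with the same fixed field, and that the hypotheses force the full progression $\{b+ms \mid k_0\le m\le k_0+\delta+r-2\}$ into $T_\beta(g)$. You also deserve credit for flagging the point the paper passes over in silence: right divisibility by $x-\tau^l(\gamma)$ in $E[x;\theta]$ and in $E[x;\tau]$, with $\tau=\theta^s$, are different conditions because the truncated norms $N^\theta_h$ and $N^\tau_h$ differ.

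However, the fix you propose for that point does not work, and this is where your argument has a genuine gap. Re-choosing the Hilbert--90 representative cannot reconcile the two divisibility conditions: for $\gamma=\theta(\alpha'')(\alpha'')^{-1}$ one has $N^\theta_h(\gamma)=\theta^h(\alpha'')(\alpha'')^{-1}$, while for $\gamma=\tau(\alpha^*)(\alpha^*)^{-1}$ one has $N^\tau_h(\gamma)=\theta^{sh}(\alpha^*)(\alpha^*)^{-1}$; no choice of $\alpha^*$ (free only up to $K^\times$) turns the exponent $h$ into $sh$ for every $h$ simultaneously, so the linear functionals cutting out the two codes remain genuinely different and your ``$\tau$-BCH progression in the $\tilde\beta$-defining set'' is not established. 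The mechanism that actually carries the reduction is the reindexing of the defining set by the multiplier $u=s^{-1}\bmod n$: since $(\theta^s)^{i'}(\beta)=\theta^{si'}(\beta)$, the defining set of the code relative to the generator $\theta^s$ is $u\,T_\beta(g)$, and the hypothesis $b+is+k_js\in T_\beta(g)$ becomes $ub+i+k_j\in u\,T_\beta(g)$, which is precisely the hypothesis of Proposition~\ref{roosbound} for $\theta^s$ with $b'=ub$ and the same $k_j$ --- this is what the paper's choice $b'=bu$ encodes and what is absent from your write-up. The accompanying passage between the $\theta$- and $\theta^s$-pictures is a coordinate permutation $h\mapsto uh\bmod n$ on the exponents of $x$, which lies in $\GL_n(K)$ and hence preserves both Hamming and rank weight, so the distance conclusions survive; it is this reindexing, not a Hilbert--90 renormalisation, that constitutes ``applying the proposition to $\theta^s$''. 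Your closing deduction of the ``in particular'' clause from Proposition~\ref{prop:skewBCHbound} and Corollary~\ref{cor:skewRSareGab} is correct once the reduction is in place.
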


\begin{proof}
Apply Proposition \ref{roosbound} to \(\theta^s\) and \(b' = bu\) where \(us + vn = 1\). 
\end{proof}

Theorem \ref{roosboundgeneral} and Proposition \ref{roosbound} use that the corresponding distances of \(\mathcal{C}\) are bounded by the distances of \(\widehat{\mathcal{C}}\). In these cases both distances are closely related as next results show. 

Let \(\pi = \theta^\mu\). Then \(F = E^\pi\). The proof of next proposition is essentially \cite[Theorem 9]{VanLint/Wilson:1986}.

\begin{proposition}
Let \(A = \{\alpha_{1}, \dots, \alpha_k\} \subseteq E\) such that \(\pi\) induces a permutation on \(A\), 
i.e., for all \(1 \leq j \leq k\), there exists a unique \(1 \leq \pi(j) \leq k\) such that \(\pi(\alpha_j) = \alpha_{\pi(j)}\). Let \(\widehat{\mathcal{C}} = \ker \left( \alpha_1^{[\theta]} \big| \cdots \big| \alpha_{k}^{[\theta]} \right) \subseteq E^n\) and \(\mathcal{C} = \widehat{\mathcal{C}} \cap F^n\). Then \(\distance_H(\widehat{\mathcal{C})} = \distance_H(\mathcal{C})\). 
\end{proposition}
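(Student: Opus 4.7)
The inclusion $\mathcal{C} \subseteq \widehat{\mathcal{C}}$ is immediate from the definition, so $\distance_H(\widehat{\mathcal{C}}) \leq \distance_H(\mathcal{C})$ holds automatically. The content lies in the reverse inequality, which I would prove by taking an arbitrary nonzero $c \in \widehat{\mathcal{C}}$ of minimum weight $w = \distance_H(\widehat{\mathcal{C}})$ and exhibiting a nonzero element of $\mathcal{C}$ of weight at most $w$. The natural candidate is the trace-like average
\[
c' = \sum_{j=0}^{\nu-1} \pi^j(\lambda c)
\]
for a suitable scalar $\lambda \in E$, since such an average will have components in $E^\pi = F$ and hence, if it is nonzero and lies in $\widehat{\mathcal{C}}$, automatically belongs to $\mathcal{C}$.

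The first key step is to show that $\widehat{\mathcal{C}}$ is stable under the componentwise action of $\pi$. For this I would start from the parity-check relations $\sum_{i=0}^{n-1} c_i \theta^i(\alpha_j) = 0$, apply $\pi$, and use that $\pi$ commutes with $\theta$ (both lie in the abelian group generated by $\theta$) together with the hypothesis $\pi(\alpha_j) = \alpha_{\pi(j)}$. This yields
\[
\sum_{i=0}^{n-1} \pi(c_i)\, \theta^i(\alpha_{\pi(j)}) = 0
\]
for every $j$; because $\pi$ permutes $\{1,\dots,k\}$, the system obtained as $j$ varies is exactly the parity-check system for $\pi(c)$, so $\pi(c) \in \widehat{\mathcal{C}}$. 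Multiplying by $E$-scalars is harmless since $\widehat{\mathcal{C}}$ is an $E$-subspace, so $\pi^j(\lambda c) \in \widehat{\mathcal{C}}$ for every $j$, and therefore $c' \in \widehat{\mathcal{C}}$.

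The remaining step is to choose $\lambda$ ensuring $c' \neq 0$ and to control $\weight(c')$. Picking any index $i_0$ with $c_{i_0} \neq 0$, I would use that the extension $E/F$ is Galois, hence separable, so the trace map $\trace_{E/F} : E \to F$ given by $\trace_{E/F}(x) = \sum_{j=0}^{\nu-1} \pi^j(x)$ is surjective; in particular there exists $\lambda \in E$ with $\trace_{E/F}(\lambda c_{i_0}) \neq 0$. With this choice, $c'_{i_0} \neq 0$ so $c'$ is nonzero, and the support of $c'$ is contained in the support of $c$ because $c_i = 0$ forces $\pi^j(\lambda c_i) = 0$ for every $j$; hence $\weight(c') \leq \weight(c) = w$. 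Combining the three properties gives a nonzero codeword in $\mathcal{C}$ of weight $\leq w$, proving $\distance_H(\mathcal{C}) \leq \distance_H(\widehat{\mathcal{C}})$. The only subtle point is the $\pi$-stability verification in the second paragraph, where both the commutation of $\pi$ with $\theta$ and the permutation property of $\pi$ on $A$ are essential; without the latter the transformed parity-check equations would not close up into the original system.
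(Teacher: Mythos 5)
Your proof is correct and follows essentially the same route as the paper: both establish $\pi$-stability of $\widehat{\mathcal{C}}$ from the permutation action on the columns of the parity-check matrix and then pass to the trace $\trace_{E/F}(\lambda c)$ of a suitable scalar multiple of a minimum-weight codeword, whose support is contained in that of $c$. The only difference is cosmetic: you make explicit, via non-degeneracy of the trace for the separable extension $E/F$, the choice of $\lambda$ that the paper compresses into ``up to replacing $c$ with some scalar multiple.''
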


\begin{proof}
Since \(\mathcal{C} \subseteq \widehat{\mathcal{C}}\), it follows that \(\distance_H(\mathcal{C}) \geq \distance_H(\widehat{\mathcal{C}})\). Let \(c = (c_0, \dots, c_{n-1}) \in \widehat{\mathcal{C}}\) such that \(\weight(c) = \distance_H(\widehat{\mathcal{C}})\). The hypothesis \(\pi(\alpha_j) = \alpha_{\pi(j)}\) implies that \(\pi(\theta^i(\alpha_j)) = \theta^i(\alpha_{\pi(j)})\), so, for each \(0 \leq j \leq s-1\), \(\pi^j\) induces a permutation on the columns of \(\left( \alpha_1^{[\theta]} \big| \cdots \big| \alpha_{k}^{[\theta]} \right)\). Since \(c \in \widehat{\mathcal{C}}\), 
\[
(c_0, \dots, c_{n-1}) \left( \alpha_1^{[\theta]} \big| \cdots \big| \alpha_{k}^{[\theta]} \right) = 0,
\]
so 
\[
(\pi^j(c_0), \dots, \pi^j(c_{n-1})) \left( \alpha_1^{[\theta]} \big| \cdots \big| \alpha_{k}^{[\theta]} \right) = 0
\]
for each \(0 \leq j \leq s-1\). It follows that 
\[
(\trace_{E/F}(c_0), \dots, \trace_{E/F}(c_{n-1})) \left( \alpha_1^{[\theta]} \big| \cdots \big| \alpha_{k}^{[\theta]} \right)=0,
\]
i.e. \((\trace_{E/F}(c_0), \dots, \trace_{E/F}(c_{n-1})) \in \widehat{\mathcal{C}}\). Up to replacing \(c\) with some scalar multiple, we can assume \(0 \neq (\trace_{E/F}(c_0), \dots, \trace_{E/F}(c_{n-1})) \in F^n \), hence \((\trace_{E/F}(c_0), \dots, \trace_{E/F}(c_{n-1})) \in \mathcal{C} \setminus \{0\}\). Therefore
$$\distance_H(\mathcal{C}) \leq \weight(\trace_{E/F}(c_0), \dots, \trace_{E/F}(c_{n-1})) 
\leq \weight(c) = \distance_H(\widehat{\mathcal{C}}), 
$$
and then we have the equality. 
\end{proof}

By using the notation of \cite[p. 94]{Gomez/Lobillo/Navarro/Neri:2018}, let \(C_n = \{0, 1, \dots, n-1\}\) be regarded as a cyclic group of order $n$ and, since $n=\nu\mu$, \(\mu C_n = \{0, \mu, \dots, (\nu-1)\mu\}\) is a subgroup of order $\nu$ of $C_n$. Moreover, let $C_n/\mu C_n$ be the quotient group. If \(T = T^1 \cup \dots \cup T^\ell \subseteq C_n\) such that \(T^j \in C_n/\mu C_n\), it follows that \(i \in T\) implies \(i + \mu \in T\). A set with this property is said to be \emph{$\mu$-closed}. The defining set of a polynomial \(g \in R = F[x;\sigma]\) is \(\mu\)-closed because \(F = E^\pi\). 

\begin{proposition}
Let \(g \in R\) such that \(g \mid_r x^n-1\). Let \(\alpha \in E\) such that \(\{\alpha, \theta(\alpha), \dots, \theta^{n-1}(\alpha)\}\) is a normal basis. Let \(\beta = \theta(\alpha)\alpha^{-1}\) and let \(T_\beta(g) = \left\{i \in C_n ~:~ x - \theta^i(\beta) \mid_r g \right\}\). Then \(\pi\) induces a permutation on \(\left\{\theta^i(\alpha) ~|~ i \in T_\beta(g) \right\}\). 
\end{proposition}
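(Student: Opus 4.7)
The plan is to translate the action of $\pi$ on the set $\{\theta^i(\alpha) \mid i \in T_\beta(g)\}$ into the additive action of $\mu$ on the index set $T_\beta(g) \subseteq C_n$, and then invoke the $\mu$-closedness property recorded in the paragraph preceding the proposition. Since $\pi = \theta^\mu$, for every $i \in C_n$ we have
\[
\pi(\theta^i(\alpha)) = \theta^{i+\mu}(\alpha),
\]
so the action of $\pi$ on the normal basis $\{\alpha, \theta(\alpha), \dots, \theta^{n-1}(\alpha)\}$ is conjugate, via the map $i \mapsto \theta^i(\alpha)$, to the shift $i \mapsto i+\mu$ on $C_n$.

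First I would observe that the normal basis hypothesis ensures the map $i \mapsto \theta^i(\alpha)$ from $C_n$ to $\{\alpha, \theta(\alpha), \dots, \theta^{n-1}(\alpha)\}$ is a bijection, because the $\theta^i(\alpha)$ are $K$-linearly independent and hence pairwise distinct. In particular, any statement about $\pi$ permuting a subset of this basis is equivalent to the corresponding statement about the shift by $\mu$ permuting the underlying subset of indices in $C_n$.

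Next, because $g \in R = F[x;\sigma]$, the text just before the proposition tells us that the defining set $T_\beta(g)$ is $\mu$-closed, i.e., $i \in T_\beta(g)$ implies $i+\mu \in T_\beta(g)$ in $C_n$. Hence the shift by $\mu$ maps $T_\beta(g)$ into itself. Since this shift is an automorphism of the finite group $C_n$, it is injective on $T_\beta(g)$, and an injective self-map of a finite set is automatically a bijection. Transporting this conclusion through the bijection $i \mapsto \theta^i(\alpha)$ gives that $\pi$ permutes $\{\theta^i(\alpha) \mid i \in T_\beta(g)\}$, as required.

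There is essentially no obstacle here: the argument is a direct unwinding of the definition of $\mu$-closedness combined with the standard fact that an injective self-map of a finite set is a permutation. The only point that needs minimal care is the bookkeeping that identifies the action of $\pi$ on the basis with the shift by $\mu$ on $C_n$, which is immediate once one remembers that $\pi = \theta^\mu$ and that the normal basis property makes the indexing faithful.
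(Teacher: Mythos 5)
Your proposal is correct and follows essentially the same route as the paper: both reduce the claim to the $\mu$-closedness of $T_\beta(g)$ (which the paper justifies via \cite[Lemma 4.3]{Gomez/Lobillo/Navarro/Neri:2018}) and the identity $\pi(\theta^i(\alpha)) = \theta^{i+\mu}(\alpha)$. Your additional remarks on the faithfulness of the indexing (from the normal basis hypothesis) and on an injective self-map of a finite set being a bijection simply make explicit what the paper leaves implicit.
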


\begin{proof}
By \cite[Lemma 4.3]{Gomez/Lobillo/Navarro/Neri:2018}, \(T_\beta(g) = T^1 \cup \dots \cup T^\ell\) for some cosets \(T^j \in C_n/\mu C_n\). Let \(A = \left\{\theta^i(\alpha) ~|~ i \in T_\beta(g) \right\}\). If \(\theta^i(\alpha) \in A\), 
\[
\pi(\theta^i(\alpha)) = \theta^{i+\mu}(\alpha) \in A
\]
because \(i \in T_\beta(g)\) implies \(i+\mu \in T_\beta(g)\). 
\end{proof}

\begin{example}\label{ex:code}
Let $F=\F_{2^6}$ be the finite field with $2^6$ elements, $a$ be a primitive element satisfying $a^6+a^4+a^3+a+1$ and consider the automorphism $\sigma:F\to F$ given by $\sigma(a)=a^2$. The order of $\sigma$ is $6$. 

Moreover, let $E=\F_{2^{12}}$ be an extension field of $F$. Let $\gamma$ be a primitive element of $E$ satisfying $\gamma^{12}+\gamma^7+\gamma^6+\gamma^5+\gamma^3+\gamma+1$. The embedding $\varphi: F\to E$ is defined as $\varphi(a)= \gamma^{9}+\gamma^5 + \gamma^4+\gamma^2+\gamma=\gamma^{65}$. Let $\theta: E \to E$ be the extension of the automorphism $\sigma$ to the field $E$, that is the Frobenius automorphism of order $12$. 

Now, fix $\alpha:=\gamma^5$ to be a normal element of $E$ as a $\F_2$-vector space. Hence $\beta := \theta(\alpha)\alpha^{-1} = \gamma^{5}$. Choose the parameters of the Roos bound as $b=0$, $\delta=3$, $r=1$, $k_0=9$ and $k_1=10$. It follows that the defining set we are looking for is  $T_\beta(g)=\{2,3,4,8,9,10\}$. Now we compute the least common left multiple $\lclm{x-\theta^i({\beta})}^{i=2,3,4,8,9,10}\in F[x;\sigma]$ which defines a skew cyclic code of dimension $6$ and distance at least $4$. In particular, the code has generator polynomial 
$$g = x^6+a^{31}x^5 + a^{26}x^4+ax^3 + a^{5}x^2 + a^{43}x + a^{49}.$$

With the aid of the software Magma \cite{MR1484478}, we can then compute the exact distance of the code that turns to be $6$. Therefore, the code $\mathcal{C} = \mathcal{R} g$ is a $[12,6,6]$ code over the field $F=\F_{2^6}$.
\end{example}

\begin{example}\label{ex:nonConsecutiveKs}
Let $K=\F_2$, $F=\F_{2^7}$, $a$ be a primitive element and $\sigma:F\to F$, given by $\sigma(a)=a^2$. Let $E = \F_{2^{14}}$ be the extension field of $F$ of degree $2$ and $\gamma$ be a primitive element of $E$. By following the Example \ref{ex:code}, let $\alpha:=\gamma^7$ be a normal element of $E$ as $K$-vector space and fix $\beta:=\theta(\alpha)\alpha^{-1} = \gamma^7$. Consider $b=0, \delta=3, r=2, k_0=2, k_1=4, k_2 =5$ as the parameters of the Roos bound. It follows that the defining set for the code we are constructing is $T_\beta(g) = \{ 2, 3, 4, 5, 6, 9, 10, 11, 12, 13 \}$, and $g$ is computed as the least common left multiple $\lclm{\{x-\theta^i(\beta) ~|~ i\in {T_\beta(g)}\}} \in F[x;\sigma]$. The code generated by $g$ is a $[14, 4, 11]$ MDS linear code over $\F_{2^7}$. 
\end{example}

\begin{example}
We are going to include an example concerning convolutional codes. Convolutional codes can be equivalently described as direct summands of \(\field{}[z]^n\), where \(\field{}\) is a finite field, or as a vector subspace of \(\field{}(z)^n\), the field of rational functions over a finite field. This equivalence was firstly established in \cite[Theorem 3]{Forney:1970}, and a more recent refinement can also be found in \cite[Proposition 1]{Gomez/Lobillo/Navarro:2019}. For this example we follow an analogous construction to \cite[Example 2.5]{Gomez/Lobillo/Navarro/Neri:2018}. Let \(F = \field[16](z)\) and \(\sigma : F \to F\) the automorphism defined by \(\sigma(z) = \frac{b^9}{z + b^4}\), where \(\field[16] = \field[2][b]/(b^4+b+1)\). This is an automorphism of order \(\mu = 15\) and, by L\"uroth's Theorem \cite[\S 10.2]{vanderWaerden:1949}, the invariant subfield is \(K = F^\sigma = \field[16](u)\) for some \(u \in \field[16](z)\). Let \(\field[256] = \field[2][a]/(a^8+a^4+a^3+a^2+1)\). It is straightforward to check that a canonical embedding \(\epsilon: \field[16] \to \field[256]\) is defined by \(\epsilon(b) = a^{17}\). Let \(\pi : \field[256] \to \field[256]\) be the automorphism defined by \(\pi(a) = a^{16}\), and let also denote by \(\pi\) the canonical extension to \(E = \field[256](z)\), i.e. 
\[
\pi \left( \frac{a_0 + a_1 t + \dots + a_m t^m}{b_0 + b_1 t + \dots + b_{m'} t^{m'}} \right) = \frac{a_0^{16} + a_1^{16} t + \dots + a_m^{16} t^m}{b_0^{16} + b_1^{16} t + \dots + b_{m'}^{16} t^{m'}}.
\]
We also use \(\sigma\) to denote its canonical extension to \(\sigma : E \to E\), so \(\sigma(z) = \frac{a^{153}}{z + a^{68}}\).  Since \(\field[16] = \field[256]^\pi\), it follows that \(\sigma \pi = \pi \sigma\), so \(\theta = \sigma \pi : E \to E\) is an extension of \(\sigma\) of degree \(\nu = 2\). In order to build a skew cyclic convolutional code of a designed Hamming distance using the Roos bound, we need a normal basis of \(E\) over \(K = E^\theta\). Such a basis can be obtained from \(\alpha = a z\), and the corresponding root is \(\beta = \theta(\alpha) \alpha^{-1} = \frac{a^{168}}{z^2 + a^{68}z}\). Let \(T = \{0, 2, 3, 4, 7, 9, 10, 11, 15, 17, 18, 19, 22, 24, 25, 26\}\). Then \(T\) is \(\mu\)-closed, and \(g = \lclm{\{x - \theta^i(\beta) ~|~ i \in T\}}\) generates a skew cyclic convolutional code of rate \(14/30\). This polynomial has degree \(16\) and its coefficients are rational functions up to degree \(11\) which we have computed with the aid of \cite{sage}. 
Since \(T \supseteq \{0, 2, 3, 4, 7, 9, 10, 11\}\), which correspond with the parameters \(b = 0\), \(\delta = 3\), \(s = 7\) and \(k_0, k_1, k_2, k_3 = 0,2,3,4\). So, its Hamming distance is bounded from below by \(\delta + r = 3 + 3 = 6\). 
\end{example}

We conclude this section including the following table, which provides a list of skew cyclic codes, computed as in Example \ref{ex:code}. Hence $F = K(a)$, where $a$ is a primitive element of $F$ and $E =K(\gamma)$, with $\gamma$ primitive element of $E$. The generator polynomials of the skew cyclic codes in the table are computed by the aid of Magma \cite{MR1484478} as least common left multiples (we omit to write it for brevity). Moreover, always with the aid of Magma, we computed the effective minimum distances of the constructed skew-cyclic codes. Observe that in some cases with this construction we obtain codes reaching the Singleton bound.

\begin{table}[!htbp]
\centering
\tiny{\begin{tabular}{|c|c|c|c|c|c|c|c|c|}
\hline
$K$ & $F$ & $E = K(\gamma)$ & $\alpha^{\phantom{G}}_{\phantom{2}}$                                                              & $b$ & $\delta$ & $r$ & $T_\beta(g)$                             & $[n,k,d]$                        \\ \hline
$\mathbb{F}_2$ & $\mathbb{F}_{2^6}$   & $\mathbb{F}_{2^{12}}$  &       $\gamma^5$                                                            & 0   & 3        & 1   & $\{2,3,4,8,9,10\}$                    & $[12,6,6]$                       \\ \hline
$\mathbb{F}_2$ & $\mathbb{F}_{2^6}$   & $\mathbb{F}_{2^{12}}$        &  $\gamma^5$                                                            & 0   & 3        & 1   & $\{1,2,3,4,7,8,9,10\}$                & $[12,4,8]$                       \\ \hline
$\mathbb{F}_2$ & $\mathbb{F}_{2^5}$   & $\mathbb{F}_{2^{20}}$        &  $\gamma^{11}$                                                         & 0   & 3        & 1   & $\{1,2,3,6,7,8,11,12, 13, 16,17,18\}$ & $[20,8,11]$                      \\ \hline
$\mathbb{F}_2$ & $\mathbb{F}_{2^7}$   & $\mathbb{F}_{2^{14}}$        &  $\gamma^{7}$                                                          & 0   & 3        & 1   & $\{0,5,6,7,12,13\}$                   & $[14,8,7]^*$  \\ \hline
$\mathbb{F}_2$ & $\mathbb{F}_{2^7}$   & $\mathbb{F}_{2^{14}}$        &  $\gamma^{7}$                                                          & 0   & 3        & 2   & $\{0,1,2,3,4,7,8,9,10,11\}$           & $[14,4,11]^*$\\ \hline
$\mathbb{F}_3$ & $\mathbb{F}_{3^6}$   & $\mathbb{F}_{3^{12}}$        &  $\gamma^{7}$                                                          & 0   & 3        & 1   & $\{2,3,4,8,9,10\}$                    & $[12,6,7]^*$ \\ \hline
$\mathbb{F}_3$ & $\mathbb{F}_{3^5}$   & $\mathbb{F}_{3^{15}}$        &  $2\gamma^{13} + \gamma^{11} +\gamma^{10}+2$                           & 0   & 3        & 1   & $\{2,3,4,7,8,9,12,13,14\}$            & $[15,6,10]^*$ \\ \hline
$\mathbb{F}_5$ & $\mathbb{F}_{5^5}$   &$\mathbb{F}_{5^{10}}$        &  $\gamma^{9} + \gamma^{7}+\gamma^{6}+3\gamma^{5}+2\gamma^{3}+\gamma+3$ & 0   & 3        & 1   & $\{0,1,2,5,6,7\}$                     & $[10,4,7]^*$                       \\ \hline
\end{tabular}}
\vspace{0.5cm}
\caption{Skew cyclic codes constructed using the Roos bound. The rows in which appears a $^*$ indicate that the corresponding code is MDS.}\label{Table}
\end{table}

\section{Skew Roos bound for the rank metric}\label{sec:RboundRank}
In this section we provide the rank metric version of the skew Roos bound, which improves the bound of Theorem \ref{roosboundgeneral}. The proof uses all the tools developed in the previous sections, and in particular it relies on Theorem \ref{roosboundgeneral}, Lemmas \ref{circulantlemma} and \ref{chainofsubspaces} and Proposition \ref{prop:minrank}.

Also in this section we will use the notation introduced in Definition \ref{def:Tbetag}, writing  $\C= \mathcal{R}g$, where $g \in R$ is such that $g \mid_r x^n-1$, and \(\widehat{\mathcal{C}} = \mathcal{S} g\). 


\begin{theorem}[Skew Roos bound for the rank metric]\label{thm:roosrank}
Assume that there are $b, s, \delta, k_0, \dots, k_r$ such that $(s,n) = 1$,  $k_j < k_{j+1}$ for $0\leq j \leq r-1$, $k_r - k_0 \leq \delta + r - 2$, and $b + si + k_j \in T_\beta(g)$ for all $0 \leq i \leq \delta-2$ and $0 \leq j \leq r$. Then  $\distance_R(\C) \geq \distance_R(\widehat{\mathcal{C}}) \geq \delta + r$
\end{theorem}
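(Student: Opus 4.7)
The first inequality $\distance_R(\mathcal{C}) \geq \distance_R(\widehat{\mathcal{C}})$ is immediate from $\mathcal{C} \subseteq \widehat{\mathcal{C}}$. For the second, I would argue by contradiction, closely mirroring the proof of Theorem~\ref{roosboundgeneral} but replacing the Hamming ``support positions'' $l_1, \dots, l_w$ of a light codeword with auxiliary elements coming from a rank decomposition of $c$. This avoids invoking Proposition~\ref{prop:minrank}: one can work directly with the rank weight.

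Concretely, assume $c \in \widehat{\mathcal{C}}$ is nonzero with $\weight_R(c) = w \leq \delta + r - 1$, and write $c = (y_1, \dots, y_w)\,T$, where $y_1, \dots, y_w \in E$ form a $K$-basis of $\langle c_0, \dots, c_{n-1}\rangle_K$ and $T = (t_{h,k}) \in K^{w \times n}$ has rank $w$. By the definition of $T_\beta(g)$ and Proposition~\ref{prop:paritycheck}, for every $\gamma \in T_\beta(g)$ one has $\sum_{k=0}^{n-1} c_k \,\theta^{\gamma+k}(\alpha) = 0$. Substituting $c_k = \sum_h y_h t_{h,k}$ and using that each $t_{h,k} \in K = E^\theta$ is $\theta$-invariant, the identity collapses to
\[
\sum_{h=1}^{w} y_h\,\theta^\gamma(v_h) = 0, \qquad v_h := \sum_{k=0}^{n-1} t_{h,k}\,\theta^k(\alpha).
\]
The $K$-linear independence of the rows of $T$, together with the fact that $\{\alpha, \theta(\alpha), \dots, \theta^{n-1}(\alpha)\}$ is a $K$-basis of $E$, guarantees that $v_1, \dots, v_w$ are $K$-linearly independent in $E$.

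Restricting $\gamma$ to the Roos set $\{b+si+k_j \mid 0 \leq i \leq \delta-2,\ 0 \leq j \leq r\}$ and factoring $\theta^b$ out of the system, the constraints read $(y_1,\dots,y_w)\,\theta^b(B) = 0$ with
\[
B = \bigl(\,A_0 \,\big|\, \theta^s(A_0) \,\big|\, \cdots \,\big|\, \theta^{s(\delta-2)}(A_0)\,\bigr), \qquad A_0 = \bigl(\theta^{k_j}(v_h)\bigr)_{\substack{1 \leq h \leq w \\ 0 \leq j \leq r}},
\]
which is precisely the shape treated by Lemma~\ref{chainofsubspaces} with $t = \delta-1$. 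Since that lemma is stated for exactly $t+r$ row generators, I would first extend $v_1, \dots, v_w$ to a $K$-linearly independent family $v_1, \dots, v_{\delta+r-1}$ in $E$ (possible because $w \leq \delta+r-1 \leq n = [E:K]$, the upper bound holding since the $k_j$ all lie in $\{0,\dots,n-1\}$). Lemma~\ref{chainofsubspaces} applied to the enlarged family yields full row rank $\delta+r-1$ over $E$ for the corresponding block matrix; hence its top $w$ rows, which form $B$, are $E$-linearly independent. This forces $(y_1,\dots,y_w) = 0$, contradicting $c \neq 0$.

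The main delicate step is the rewriting that turns the parity-check identities into relations among the vectors $v_h$: it is precisely the $\theta$-invariance of the entries of $T$ which allows one to pull $\theta^\gamma$ out of the sum and thereby recover exactly the block structure of Lemma~\ref{chainofsubspaces}. Once this reduction is in place, the rest of the argument is a direct transposition of the Hamming-metric proof, with the completion of the $v_h$'s to a maximal linearly independent family being a mere bookkeeping adjustment.
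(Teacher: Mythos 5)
Your proof is correct, and it reaches Lemma \ref{chainofsubspaces} by a genuinely different road than the paper. The paper's proof invokes Proposition \ref{prop:minrank} to rewrite $\distance_R(\widehat{\mathcal{C}})$ as $\min\{\distance_H(\widehat{\mathcal{C}}\cdot M)\mid M\in\GL_n(K)\}$ and then reruns the Hamming-metric argument of Theorem \ref{roosboundgeneral} for each $M$: a low-Hamming-weight codeword of $\widehat{\mathcal{C}}\cdot M^{-1}$ with support $S$ leads to the matrix $M_S\tilde{B}$, whose rows are built from the $K$-linearly independent elements $\beta_h=M_{\{l_h\}}\alpha^{[\theta]}$. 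Your rank factorization $c=(y_1,\dots,y_w)T$ with $T\in K^{w\times n}$ of rank $w$ produces exactly the same data---your $v_h=\sum_k t_{h,k}\theta^k(\alpha)$ play the role of the $\beta_h$, and pulling $\theta^\gamma$ through the $K$-entries of $T$ is the same step as the identity $M_S\,\theta^{si}(A)=\theta^{si}(M_SA)$ in the paper---but it does so intrinsically, working with a single codeword of low rank weight rather than quantifying over all of $\GL_n(K)$. What your version buys is a shorter, self-contained argument that makes the role of the rank decomposition explicit and dispenses with Proposition \ref{prop:minrank} entirely; what it costs is the small bookkeeping step of extending $v_1,\dots,v_w$ to $\delta+r-1$ independent elements so that Lemma \ref{chainofsubspaces} (stated for exactly $t+r$ rows) applies, a step the paper sidesteps by padding the Hamming support of $c$ to exactly $w=\delta+r-1$ positions. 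Both arguments rely on the same implicit nondegeneracy assumption $\delta+r-1\le n$ and conclude identically from the full row rank of the block matrix, so your reduction is sound.
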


\begin{proof}
As before \(\distance_R(\mathcal{C}) \geq \distance_R(\widehat{\mathcal{C}})\) because \(\mathcal{C}\) is a subfield subcode of \(\widehat{\mathcal{C}}\).
 By Proposition \ref{prop:minrank}, we need to prove that for every $M^{-1} \in \GL_n(K)$, we have $\distance_H( \widehat{\C}\cdot M^{-1})\geq \delta+r$. Take a generic $M \in \GL_n(K)$, define $w = \delta+r-1$ and consider $c \in \widehat{\C} \cdot M^{-1}$ such that $\weight(c) \leq w$, i.e. \(c = \sum_{h=1}^w c_h x^{l_h}\) for a suitable $S:=\{l_1, \dots, l_w\} \subseteq \{0, \dots, n-1\}$. 
 Denote by $M_S$ the matrix obtained from $M$ only selecting the rows indexed by the elements in $S$ (here we assume the row indices to be $0,1,\ldots, n-1$). As in the proof of Theorem \ref{roosboundgeneral}, we get that $\bar{c}:=(c_1,\ldots, c_w)$ belongs to the left kernel of the matrix $M_S\tilde{B}$, where
 $$\tilde{B}= \left(\begin{array}{c|c|c|c}
A & \theta^{s}(A) & \cdots & \theta^{s(\delta-2)}(A)
\end{array}\right),$$
 and 
 $$A=\Big( \theta^{k_j+h}(\alpha) \Big)_{\substack{0 \leq h \leq n-1 \\ 0 \leq j \leq r}}.$$
 Now observe that
 \begin{align*}M_S\tilde{B}&=\left(\begin{array}{c|c|c|c}
M_SA & M_S \theta^{s}(A) & \cdots & M_S\theta^{s(\delta-2)}(A)
\end{array}\right) \\
&= \left(\begin{array}{c|c|c|c}
M_SA & \theta^{s}(M_SA) & \cdots & \theta^{s(\delta-2)}(M_SA)
\end{array}\right), \end{align*}
where the last equality follows from the fact that the coefficients of $M_S$ are in $K$, and hence are fixed by $\theta$. We observe now that the matrix $M_SA$ is of the form
$$M_SA=A_0 = \Big( \theta^{k_j}(\beta_h) \Big)_{\substack{1 \leq h \leq w \\ 0 \leq j \leq r}},$$
where the elements $\beta_h$'s are given by $\beta_h=M_{\{l_h\}}\alpha^{[\theta]}$, and are linearly independent over $K$. Hence, by Lemma \ref{circulantlemma}, $A_0$ has rank $r$.
At this point, applying Lemma \ref{chainofsubspaces} on the matrices $M_S\tilde{B}$ and $M_SA=A_0$, with $t = \delta-1$ we get that $\rank(M_S\tilde{B})=w$ and hence $\bar{c}=0$, so $c=0$ is the only element in $\widehat{\C} \cdot M^{-1}$ of weight at most $\delta + r-1$. This proves that $\distance_H(\widehat{\C} \cdot M^{-1})\geq \delta+r$ and concludes the proof.
\end{proof}

In Section \ref{sec:rankmetric}, we mentioned that Gabidulin codes are MRD codes, since their parameters attain a Singleton-like bound for the rank metric. Actually, there are two Singleton-like bounds for the rank metric, depending on how the length and the extension degree of the code are related. Formally, let $\C$ be an $[n,k,d]_{F/K}$ rank-metric code and let $\mu=[F:K]$, then 
\begin{align}
    k & \leq n-d+1  \label{eq:Sing1}\\
    k & \leq \frac{n}{\mu}(\mu- d+1) \label{eq:Sing2}  
\end{align} 
In particular, one considers  inequality \eqref{eq:Sing1} when $n\leq \mu$, and inequality \eqref{eq:Sing2} if $\mu$ divides $n$. 
In this setting, an $[n,k,d]_{F/K}$ rank-metric code is \emph{maximum rank distance (MRD)} if its parameters meet with equality  one of the two bounds above.

 Since in the construction of rank-metric codes that  we gave using the skew Roos bound of Theorem \ref{thm:roosrank} we deal with $n=\mu \nu$, we should only consider inequality \eqref{eq:Sing2}, that with our notation becomes
 \begin{equation}\label{eq:Sing3}
 k \leq \nu(\mu-d+1).
 \end{equation}
 Hence, a code $\C$ satisfying the hypotheses of Theorem \ref{thm:roosrank} is an $[n,k,\geq \delta+r]_{F/K}$ rank-metric code, where $k=n-\deg g \leq \nu(\mu-\delta-r+1)$. 
 
 \begin{example}\label{ex:codeRank}
 Consider the code $\C$ constructed in Example \ref{ex:code} endowed with the rank metric. Putting together the Singleton-like bound in \eqref{eq:Sing3} and the skew Roos bound for the rank metric of Theorem \ref{thm:roosrank}, we get that $\C$ is a $[12,6,\geq 4]_{F/K}$ rank-metric code, where $F=\mathbb{F}_{2^6}$ and $K=\mathbb{F}_2$, which satisfies the following chain of inequalities
 $$ 4=\delta+r\leq \distance_R(\C) \leq \mu-\frac{\mu k}{n}+1=4.$$
 Therefore, the inequalities above are all equalities and $\C$ is an MRD code.
 \end{example}
 
 \begin{example}
 Consider now the code $\C$ constructed in Example \ref{ex:nonConsecutiveKs} equipped with the rank metric. In this case, combining the Singleton-like bound in \eqref{eq:Sing3} with the skew Roos bound for the rank metric of Theorem \ref{thm:roosrank}, we deduce that $\C$ is a $[14,4]_{F/K}$ code with $ F=\mathbb{F}_{2^7}$, $K=\mathbb{F}_2$ and whose minimum rank distance satisfies 
 $$ 5=\delta+r\leq \distance_R(\C) \leq \mu-\frac{\mu k}{n}+1=6.$$
 Hence, according to the two bounds, we  have an MRD code or an almost MRD code (i.e. $\distance_R(\C)=\mu-\frac{\mu k}{n}$), depending on the exact value of $ \distance_R(\C)$. However, studying the set $T_\beta(g)$ more carefully, we can see that it also satisfy a skew Roos bound with $b=0$, $s=1$, $\delta'=6$ and $r=0$ (i.e. a skew BCH bound). Hence the code $\C$ is actually an MRD code.
 \end{example}
 
 \begin{remark}
 It is very interesting to observe that the skew-cyclic code $\C$ considered in  Examples \ref{ex:code} and  \ref{ex:codeRank} is not an MDS code, but it is an MRD code (with respect to the Singleton-like bound in \eqref{eq:Sing2}).  This is quite surprising since for $[n,k]_{F/K}$ rank-metric codes such that $n\leq [F:K]$, i.e. when we need to consider the Singleton-like bound in \eqref{eq:Sing1}, MRD codes are also MDS. In addition, we have by construction that $\C=\widehat{\C}\cap F^n$, i.e. $\C$ is a subfield subcode of a rank-metric code $\widehat{\C} \leq E^n$. It is possible to verify that $\widehat{\C}$ is not an MRD code (since it has codewords of rank weight equal to $6$), even though $\C$ is MRD.
 \end{remark}
 
 In the following table, we analyze the same skew cyclic codes from Table \ref{Table}, endowed with the rank metric. Observe that, in all the cases, we get almost MRD codes or MRD codes.
 
 \begin{table}[!htbp]
\centering
\begin{tabular}{|c|c|c|c|c|c|c|c|c|}
\hline
$K$ & $F$ & $E$ & $\delta$ & $r$ & $n$ & $k$ & $\mu-\frac{\mu k}{n}+1$ & $\distance_R$                    \\ \hline
$\mathbb{F}_2$ & $\mathbb{F}_{2^6}$   & $\mathbb{F}_{2^{12}}$  &   $3$ & $1$ & 12   & $6$ & $4$ & $4^*$                      \\ \hline

$\mathbb{F}_2$ & $\mathbb{F}_{2^6}$   & $\mathbb{F}_{2^{12}}$  &   $3$ & $1$ & 12   & $4$ & $5$ & $5^*$                      \\ \hline

$\mathbb{F}_2$ & $\mathbb{F}_{2^5}$   & $\mathbb{F}_{2^{20}}$  &   $3$ & $1$ & 20   & $8$ & $4$ & $4^*$                      \\ \hline

$\mathbb{F}_2$ & $\mathbb{F}_{2^7}$   & $\mathbb{F}_{2^{14}}$  &   $3$ & $1$  & 14   & $8$ & $4$ & $4^*$                      \\ \hline

$\mathbb{F}_2$ & $\mathbb{F}_{2^7}$   & $\mathbb{F}_{2^{14}}$  &   $3$ & $2$ & 14   & $4$ & $6$ & $6^*$                      \\ \hline

$\mathbb{F}_3$ & $\mathbb{F}_{3^{6}}$   & $\mathbb{F}_{3^{12}}$  &   $3$ & $1$ & 12   & $6$ & $4$ & $4^*$                      \\ \hline

$\mathbb{F}_3$ & $\mathbb{F}_{3^5}$   & $\mathbb{F}_{3^{15}}$  &   $3$ & $1$ & 15   & $6$ & $5$ & $4\leq \distance_R\leq 5$                      \\ \hline

$\mathbb{F}_5$ & $\mathbb{F}_{5^{5}}$   & $\mathbb{F}_{5^{10}}$  &   $3$ & $1$ & 10   & $4$ & $4$ & $4^*$                      \\ \hline
\end{tabular}
\vspace{0.5cm}
\caption{Skew cyclic rank-metric codes constructed using the Roos bound. The rows in which appears a $^*$ indicate that the corresponding code is MRD.}\label{Table:rankmetric}
\end{table}

 The behaviour of the codes constructed with respect to the rank metric can be partially understood as follows. Let $T\subseteq C_n$ be a $\mu$-closed set, i.e. such that $i \in T $ if and only if $i+\mu \in T$. This means that $T=T_{\beta}(g)$ for some $g\in \mathcal R$ and $T=T^1\cup \cdots\cup T^{\ell}$, where $T^j \in C_n/\mu C_{n}$. Hence, we can just consider for each $T^j$ a representative $i_j$ belonging to $C_{\mu}=\{0,1,\ldots,\mu-1\}$. We denote this set by $T_\beta^{F}(g):=\{i_1,\ldots,i_{\ell}\}$. 
 
 \begin{proposition}\label{prop:MRDconstruction}
 Suppose that the defining set $T_{\beta}(g)$ satisfies a skew Roos bound as in Theorem \ref{thm:roosrank} for some $\delta \geq 2$ and $r\geq 0$. Then the minimum rank distance of the code  $\C=\mathcal R g$ satisfies
 $\delta+r\leq \distance_R(\C)\leq |T_\beta^F(g)|+1$. In particular, if $|T_\beta^F(g)|=\delta+r-1$, then  $\C$ is an MRD code.
 \end{proposition}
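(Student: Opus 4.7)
The plan is to pair the rank-metric skew Roos bound of Theorem \ref{thm:roosrank} with the Singleton-like bound \eqref{eq:Sing3}, using a clean dimension count of $\C$ in terms of $|T_\beta^F(g)|$.

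The lower bound $\distance_R(\C) \geq \delta + r$ is exactly the content of Theorem \ref{thm:roosrank}, so no extra work is needed there. For the upper bound, I would first invoke \cite[Lemma 4.3]{Gomez/Lobillo/Navarro/Neri:2018} to write $T_\beta(g)$ as a disjoint union of $\ell = |T_\beta^F(g)|$ full cosets of $\mu C_n$ in $C_n$, each of cardinality $\nu$, giving $|T_\beta(g)| = \nu\,|T_\beta^F(g)|$. Next, since the family $\{\theta^i(\alpha) : i \in T_\beta(g)\}$ is a subset of the normal basis $\{\alpha, \theta(\alpha), \ldots, \theta^{n-1}(\alpha)\}$, it is $K$-linearly independent, and the Circulant Lemma (Lemma \ref{circulantlemma}), together with the Lam-Leroy theory of P-independence, ensures that $\lclm{\{x - \theta^i(\beta) : i \in T_\beta(g)\}}$ has degree exactly $|T_\beta(g)|$. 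Combining this with the bijection between $\mu$-closed subsets of $C_n$ and right divisors of $x^n - 1$ inside $R = F[x;\sigma]$ arising from the defining-set framework of Section \ref{sec:defsets}, one obtains $g = \lclm{\{x - \theta^i(\beta) : i \in T_\beta(g)\}}$ and hence $\deg(g) = \nu\,|T_\beta^F(g)|$.

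Setting $k = \dim_F\C = n - \deg(g) = \nu(\mu - |T_\beta^F(g)|)$ and plugging into \eqref{eq:Sing3} yields
\[
\distance_R(\C) \leq \mu - \frac{k}{\nu} + 1 = |T_\beta^F(g)| + 1,
\]
so the chain $\delta + r \leq \distance_R(\C) \leq |T_\beta^F(g)| + 1$ holds. When $|T_\beta^F(g)| = \delta + r - 1$ the two bounds collide, forcing $\distance_R(\C) = \delta + r = \mu - k/\nu + 1$; hence $\C$ attains \eqref{eq:Sing3} and is MRD. The delicate point I expect to require the most care is the identity $\deg(g) = |T_\beta(g)|$, i.e.\ that $g$ is completely recovered as the lclm of the linear factors in its own $\beta$-defining set; this is a structural fact resting on the normal-basis assumption together with the $\mu$-closedness of $T_\beta(g)$, while the remainder is a bookkeeping manipulation of the Singleton-like bound.
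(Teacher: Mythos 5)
Your argument is essentially the paper's own proof: the lower bound is quoted from Theorem \ref{thm:roosrank}, the count $|T_\beta(g)|=\nu\,|T_\beta^F(g)|$ comes from the $\mu$-closedness of the defining set, and the upper bound is read off from the Singleton-like bound \eqref{eq:Sing3} after computing $k=n-\deg(g)$. The only divergence is that you explicitly isolate the identity $\deg(g)=|T_\beta(g)|$, which the paper uses silently when it writes $k=n-|T_\beta(g)|$. Your instinct that this is the delicate point is right, but the justification you offer --- a ``bijection between $\mu$-closed subsets of $C_n$ and right divisors of $x^n-1$ inside $R$'' --- is not established anywhere in the paper and is false in general: a right divisor of $x^n-1$ in $R$ can have right linear factors $x-\gamma$ with $\gamma$ outside the orbit $\{\theta^i(\beta)\}$, in which case $\deg(g)>|T_\beta(g)|$ and the Singleton computation only yields the weaker bound $\distance_R(\C)\leq \mu-\deg(g)/\nu+1$... wait, rather $\distance_R(\C)\leq \mu-k/\nu+1$ with $k/\nu<\mu-|T_\beta^F(g)|$, which exceeds $|T_\beta^F(g)|+1$. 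What is true, and what both your proof and the paper's actually rely on, is that $g$ is taken to be $\lclm{\{x-\theta^i(\beta) \mid i\in T_\beta(g)\}}$ (as in every construction in the paper), in which case $\deg(g)=|T_\beta(g)|$ follows from the $K$-linear independence of the $\theta^i(\alpha)$ via Lemma \ref{circulantlemma} and \cite[Theorem 5.3]{Delenclos/Leroy:2007}. So your proof is on the same footing as the paper's; just replace the appeal to a general bijection by the standing assumption that $g$ is generated by its defining set.
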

 
 \begin{proof}
 The first inequality is the skew Roos bound of Theorem \ref{thm:roosrank}. For the second inequality, we have that $T_\beta^F(g)$ is a system of representative for $T_\beta(g)$, which is its $\mu$-closure. Therefore, $|T_{\beta}(g)|= \nu |T_{\beta}^F(g)|$ and $k=n-|T_{\beta}(g)|=\nu\mu-\nu|T_{\beta}^F(g)|$. Combining this equality with \eqref{eq:Sing3}, we obtain
 $$k =\nu\mu-\nu|T_{\beta}^F(g)| \leq \nu(\mu-d+1),$$
 from which we derive the desired inequality. The second statement follows directly.
 \end{proof}
 
 \begin{remark}
 Proposition \ref{prop:MRDconstruction} translates the skew Roos bound and the Singleton-like bound in an arithmetic problem. Indeed, it essentially requires to find a defining set with a suitable cardinality and only working modulo $n$ and $\mu$ to construct rank-metric codes whose minimum distance is upper and lower-bounded. 
 \end{remark}
 
 We can observe that in almost all the cases of Table \ref{Table:rankmetric} with $r=1$, we get $|T_\beta^F(g)|=\delta+r-1$, with the $\delta$ and the $r$ provided. In the codes from the second and the fifth rows, we get $|T_\beta^F(g)|=\delta'+r'-1$, with some different $\delta'$ and $r'$ for which $T_\beta(g)$ satisfies the skew Roos bound. 
 
 \begin{corollary}\label{cor:repeatedGab}
  Let $b,\delta', \mu, \nu, n, s$ be nonnegative integers such that $\mu,\nu \geq 1$, $2\leq \delta' \leq \mu$, $n=\mu\nu$ and $(s,n)=1$. Define $T:=\{b,b+s, b+2s, \ldots, b+(\delta'-2)s\}\subseteq C_{\mu}$, where all the elements are taken modulo $\mu$, and let $\bar{T}$ be its $\mu$-closure in $C_{n}$. Then $\bar{T}=T_\beta(g)$ for some polynomial $g \in R = F[x;\sigma]$, such that the code $\mathcal Rg$ is an $[n,n-\nu(\delta'-1),\delta']_{F/K}$ MRD code.
 \end{corollary}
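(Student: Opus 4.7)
The plan is to take the obvious candidate $g := \lclm{\{x - \theta^i(\beta) : i \in \bar{T}\}}$, verify it actually lives in $R$ (not merely in $S$), compute its degree, and then sandwich the rank distance of $\mathcal{C} = \mathcal{R}g$ between the skew Roos bound of Theorem \ref{thm:roosrank} and the Singleton-like bound \eqref{eq:Sing3}.

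First I would handle the arithmetic of the defining set. Since $(s,n)=1$ and $\mu \mid n$, one has $(s,\mu)=1$, so multiplication by $s$ permutes $C_\mu$ and the $\delta'-1$ elements $b, b+s, \ldots, b+(\delta'-2)s$ are pairwise distinct modulo $\mu$ (using $\delta'-1 \le \mu-1$). Hence $|T| = \delta'-1$. Because the $\mu$-closure of any singleton in $C_\mu$ consists of $\nu$ distinct elements of $C_n$ and different cosets mod $\mu$ give disjoint closures, $|\bar{T}| = \nu(\delta'-1)$.

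Next I would argue that $g \in R$. A priori $g \in S = E[x;\theta]$, but since $\bar{T}$ is $\mu$-closed by construction, the converse of the discussion preceding the excerpt's last Proposition, made precise in \cite[Lemma~4.3]{Gomez/Lobillo/Navarro/Neri:2018}, gives $g \in R = F[x;\sigma]$ with $T_\beta(g) = \bar{T}$. Consequently $\mathcal{C} = \mathcal{R}g$ is a skew cyclic code of length $n$ and dimension $k = n - \deg g = n - \nu(\delta'-1) = \nu(\mu - \delta'+1)$.

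For the minimum rank distance I would invoke Theorem~\ref{thm:roosrank} with parameters $\delta = \delta'$, $r = 0$, $k_0 = 0$ and the same $b,s$: the hypotheses $(s,n)=1$ and $k_r - k_0 = 0 \le \delta' - 2$ hold, and for each $0 \le i \le \delta'-2$ the element $b+si \bmod n$ reduces mod $\mu$ to $b+si \bmod \mu \in T$, hence lies in its $\mu$-closure $\bar{T} = T_\beta(g)$. This yields $\distance_R(\mathcal{C}) \ge \delta'$. Combining with the Singleton-like bound \eqref{eq:Sing3}, $\distance_R(\mathcal{C}) \le \mu - k/\nu + 1 = \mu - (\mu-\delta'+1) + 1 = \delta'$, so equality holds and $\mathcal{C}$ is an MRD code with the stated parameters.

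The only genuine subtlety is the descent from $S$ to $R$: we rely on the already-established bijection between $\mu$-closed subsets of $C_n$ and right divisors of $x^n-1$ in $R$, which is the converse direction of the $\pi$-invariance paragraph preceding Theorem~\ref{thm:roosrank}. Everything else is a bookkeeping check that the translated pattern $b, b+s, \ldots, b+(\delta'-2)s$ from $C_\mu$ lifts, via $\mu$-closure, to an arithmetic progression-style configuration inside $\bar{T} \subseteq C_n$ to which Theorem~\ref{thm:roosrank} directly applies.
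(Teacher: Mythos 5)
Your proposal is correct and follows essentially the same route as the paper: distinctness of the $\delta'-1$ elements modulo $\mu$ via $(s,\mu)=1$, the observation that $\bar{T}$ satisfies a skew Roos bound with $\delta=\delta'$ and $r=0$, and the comparison with the Singleton-like bound \eqref{eq:Sing3}. The only cosmetic difference is that you inline the combination of Theorem \ref{thm:roosrank} and \eqref{eq:Sing3} (and spell out the descent of $g$ to $R$ and the degree count), whereas the paper packages that final step as an appeal to Proposition \ref{prop:MRDconstruction} with $|T_\beta^F(g)|=\delta'-1=\delta+r-1$.
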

 
 \begin{proof}
 First, observe that $|T|=\delta'-1$, i.e. all the elements $b+is \mod \mu$ are distinct, for $0\leq i \leq \delta'-2$. Indeed, if there are $0\leq i\leq j\leq \delta'-2$ such that $b+is\equiv b+js \mod \mu$, then we would have $(j-i)s \equiv 0 \mod \mu$. Since $(s,\mu)=1$, this implies $(j-i)\equiv 0 \mod \mu$, which implies $i-j=0$, due to the assumptions that $0\leq j-i \leq \delta'-2 \leq \mu-2$. It is left to show that the $\mu$-closure of $T$, that is $\bar{T}$, satisfies a skew Roos bound with $\delta=\delta'$ and $r=0$. However, this is clear by construction, since for every $0\leq i \leq \delta'-2$ the equivalence class of $b+is$ in $C_n/\mu C_n$ is contained in $\bar{T}$. In particular the set $\{b+is \mid 0\leq i \leq \delta'-2\} \subseteq \bar{T}$ in $C_n$. We conclude the proof using Proposition \ref{prop:MRDconstruction}.
 \end{proof}
 
 \begin{example}
 Let us fix any triple of fields $K\subseteq F \subseteq E$ such that $[F:K]=\mu=11$, and $[E:F]=\nu=7$, and take the polynomial $g$ such that $T_\beta^F(g)=\{0,1,2,3,5,6\}$. We can observe that the set $T_\beta(g)$ satisfies the skew Roos bound of Theorem \ref{thm:roosrank} with $b=0$, $s=12$, $\delta=3$, $r=3$,  $k_0=0$, $k_1=1$, $k_2=2$ and  $k_3=5$. Hence, $\delta+r=6$ and $|T_\beta^F(g)|=6$, and by Proposition \ref{prop:MRDconstruction} the code $\C=\mathcal R g$ is a $[77,49, \distance_R(\C)]_{F/K}$ rank-metric code whose minimum distance satisfies $6 \leq \distance_R(\C)\leq 7$.  
 \end{example}
 
 At this point it is important to remark that in all the construction of MRD codes of Table \ref{Table:rankmetric}, the codes satisfy also a skew Roos bound with $r=0$ and $\delta=|T_\beta^F(g)|+1$, that is they can be obtained using Corollary \ref{cor:repeatedGab}. Unfortunately, it does not seem trivial to construct MRD codes according to Proposition \ref{prop:MRDconstruction}, different from the ones in Corollary \ref{cor:repeatedGab}. Indeed, this is not possible when $\mu$ is a prime number, as shown in the following result.

\begin{proposition}\label{prop:muprime}
Let $s,b, \delta,k_0,\dots,k_r$ be integers such that $(s,n)=1$,  $k_j < k_{j+1}$ for $0\leq j \leq r-1$, $k_r - k_0 \leq \delta + r - 2$, and $b + si + k_j \in T_\beta(g)$ for all $0 \leq i \leq \delta-2$ and $0 \leq j \leq r$. Moreover, assume that $\mu$ is a prime number.
If $|T_\beta^F(g)|= \delta + r -1$, then $T_\beta(g)$ satisfies a BCH bound with $\delta^\prime = \delta +r$.
\end{proposition}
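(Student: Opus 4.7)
The plan is to project everything to $\mathbb{Z}/\mu\mathbb{Z}$ and apply the additive-combinatorial theorems of Cauchy--Davenport and Vosper, which are tailored to the primality of $\mu$. Since $T_\beta(g)$ is $\mu$-closed, it is precisely the preimage in $C_n$ of $T_\beta^F(g)\subseteq\mathbb{Z}/\mu\mathbb{Z}$ under reduction mod $\mu$. Writing $A=\{si\bmod\mu\mid 0\le i\le\delta-2\}$ and $B=\{k_j\bmod\mu\mid 0\le j\le r\}$, the Roos hypothesis becomes the sumset inclusion $(b\bmod\mu)+A+B\subseteq T_\beta^F(g)$.

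The counting then forces equality. From $(s,n)=1$, hence $(s,\mu)=1$, together with $\delta-1\le\delta+r-1=|T_\beta^F(g)|\le\mu$, one gets $|A|=\delta-1$; from $k_r-k_0\le\delta+r-2\le\mu-1$ combined with strict monotonicity of the $k_j$ one gets $|B|=r+1$. Cauchy--Davenport for the prime $\mu$ yields $|A+B|\ge\min(\mu,\,\delta+r-1)$, while the inclusion above gives $|A+B|\le |T_\beta^F(g)|=\delta+r-1$. Hence $|A+B|=\delta+r-1=|T_\beta^F(g)|$ and $T_\beta^F(g)=(b\bmod\mu)+A+B$. In the generic range $|A|,|B|\ge 2$ and $\delta+r-1\le\mu-2$, Vosper's theorem then forces $A$ and $B$ to be arithmetic progressions in $\mathbb{Z}/\mu\mathbb{Z}$ with a common step. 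Since $A=\{0,s,2s,\dots,(\delta-2)s\}$ already has step $s$, one obtains $B=\{b_0,b_0+s,\dots,b_0+rs\}\bmod\mu$ for some $b_0$, and hence $T_\beta^F(g)$ is an arithmetic progression of length $\delta+r-1$ and step $s$ modulo $\mu$.

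The conclusion follows by lifting to $C_n$. Using $\mu$-closedness of $T_\beta(g)$ and $(s,n)=1$, any integer representative $b'\in C_n$ of $b+b_0$ satisfies $\{(b'+is)\bmod n\mid 0\le i\le \delta+r-2\}\subseteq T_\beta(g)$, which is exactly the BCH condition with $\delta'=\delta+r$ (and step $s$ coprime to $n$). The remaining work concerns the edges of Vosper's hypotheses: when $\delta=2$ the bound $k_r-k_0\le r$ and monotonicity force $B=\{k_0,k_0+1,\dots,k_0+r\}$, an AP of step $1$; when $r=0$ the hypothesis itself is already a BCH progression; and when $|A+B|\in\{\mu-1,\mu\}$ the set $T_\beta^F(g)$ is cofinite or full in $\mathbb{Z}/\mu\mathbb{Z}$, so it trivially contains an AP of length $\delta+r-1$ with step $1$. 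The main obstacle is precisely this edge-case analysis together with verifying that the Vosper common difference of $A$ and $B$ agrees with the specific step $s$ appearing in the Roos hypothesis; this match is what makes the lift to $C_n$ through the coprimality $(s,n)=1$ work without any further adjustment.
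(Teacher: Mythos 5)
Your proof is correct and follows essentially the same route as the paper: reduce modulo $\mu$, identify $|T_\beta^F(g)|$ with the sumset of the two progressions extracted from the Roos data, and invoke Cauchy--Davenport together with Vosper's theorem to force $T_\beta^F(g)$ to be an arithmetic progression, then lift back to $C_n$. You are in fact slightly more careful than the paper at the boundary case $|A+B|=\mu-1$ (where Vosper's AP conclusion does not apply, but the sumset is the complement of a single point and hence still an interval) and in pinning the common difference down to $s$ itself, which makes the lift through $(s,n)=1$ immediate.
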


\begin{proof}
Up to replacing $\beta$ with $\theta^b(\beta)$, it is enough to prove the statement when $b=0$. Let $A:=\{k_0,\dots k_r\}\subseteq C_{\mu}$ and $B:=\{0, s, \dots s(\delta-2)\}\subseteq C_{\mu}$. In this setting we have $T_\beta^F(g)  \supseteq A+B$, where 
$$A+B = \{a+b \mid a \in A, b \in B\}$$ and all the elements are taken modulo $\mu$. First, we can suppose $\delta+r-1<\mu$, otherwise we would get a trivial code.  Moreover, we can also assume that $\delta>2$ and $r\geq 1$, otherwise we have already a BCH bound. 
Combining the hypotheses and Cauchy-Davenport Theorem \cite{cauchy1813recherches,davenport1935addition}, we have the following equalities
$$|T_\beta^F(g)|=|A+B|=|A|+|B|-1=\delta+r-1.$$
The pairs of sets $(A, B)$ for which equality holds in the Cauchy-Davenport Theorem have been characterized by Vosper in \cite{vosper1956critical}.
Applying this result in our setting, i.e. when $|A|=\delta-1>1$, $|B|=r+1>1$ and $|A|+|B|-1<\mu$, we get that $|A+B| = |A|+|B|-1$ if and only if $A$ and $B$ are representable as arithmetic progressions with the same common difference $s^\prime$ and clearly $s^\prime$ is coprime to $\mu$. Hence also $A+B=T_\beta^F(g)$ is representable as an arithmetic progression with difference $s^\prime$, and this implies that $T_\beta(g)$ satisfies a BCH bound.
\end{proof}

\section{Conclusions and open problems}\label{sec:conclusion}
In this paper, we provided a generalization of the Roos bound for skew cyclic codes in the Hamming and rank metric over a general field. The only requirement that we ask is to have a cyclic Galois extension of finite degree, but we do not require to work on finite fields. For the rank metric case, we also provide in Proposition \ref{prop:MRDconstruction} a way to arithmetically construct codes with a prescribed minimum rank distance, using the skew Roos bound of Theorem \ref{thm:roosrank}. Finally, we constructed some example of MDS codes and MRD codes over finite fields obtained using the skew Roos bounds of Theorems \ref{roosboundgeneral}  and  \ref{thm:roosrank}. 

In the second part of Proposition \ref{prop:MRDconstruction}, we suggest a way to construct MRD codes only using an arithmetic argument modulo $\mu$ and $n$. However, we could not come up with a general construction of MRD codes based on that, except for codes satisfying a skew Roos bound with parameters $\delta=|T_\beta^F(g)|+1$ and $r=0$. Hence we suggest the following open problem.

\begin{problem}
 Is it possible to give a different systematic construction of MRD codes meeting \eqref{eq:Sing2} based on Proposition \ref{prop:MRDconstruction} that can not be obtained using Corollary \ref{cor:repeatedGab}, i.e. not satisfying any skew Roos bound with parameters $\delta=|T_\beta^F(g)|+1$ and $r=0$?
\end{problem}

As shown in Proposition \ref{prop:muprime}, the answer to this question is negative when $\mu$ is prime. However, the general case is still unclear. 

\bibliographystyle{abbrv}     
\bibliography{references}   

\begin{thebibliography}{10}

\bibitem{augot2014generalization}
D.~Augot.
\newblock Generalization of {G}abidulin codes over fields of rational
  functions.
\newblock In {\em 21st International Symposium on Mathematical Theory of
  Networks and Systems (MTNS 2014)}, 2014.

\bibitem{augot2013rank}
D.~Augot, P.~Loidreau, and G.~Robert.
\newblock Rank metric and {G}abidulin codes in characteristic zero.
\newblock In {\em 2013 IEEE International Symposium on Information Theory},
  pages 509--513. IEEE, 2013.

\bibitem{augot2018generalized}
D.~Augot, P.~Loidreau, and G.~Robert.
\newblock Generalized {G}abidulin codes over fields of any characteristic.
\newblock {\em Designs, Codes and Cryptography}, 86(8):1807--1848, 2018.

\bibitem{bose1960further}
R.~C. Bose and D.~K. Ray-Chaudhuri.
\newblock Further results on error correcting binary group codes.
\newblock {\em Information and Control}, 3(3):279--290, 1960.

\bibitem{bose1960class}
R.~C. Bose and D.~K. Ray-Chaudhuri.
\newblock On a class of error correcting binary group codes.
\newblock {\em Information and control}, 3(1):68--79, 1960.

\bibitem{MR1484478}
W.~Bosma, J.~Cannon, and C.~Playoust.
\newblock The {M}agma algebra system. {I}. {T}he user language.
\newblock {\em J. Symbolic Comput.}, 24(3-4):235--265, 1997.
\newblock Computational algebra and number theory (London, 1993).

\bibitem{boucher2007skew}
D.~Boucher, W.~Geiselmann, and F.~Ulmer.
\newblock Skew-cyclic codes.
\newblock {\em Applicable Algebra in Engineering, Communication and Computing},
  18(4):379--389, 2007.

\bibitem{boucher2009codes}
D.~Boucher and F.~Ulmer.
\newblock Codes as modules over skew polynomial rings.
\newblock In {\em IMA International Conference on Cryptography and Coding},
  pages 38--55. Springer, 2009.

\bibitem{boucher2009coding}
D.~Boucher and F.~Ulmer.
\newblock Coding with skew polynomial rings.
\newblock {\em Journal of Symbolic Computation}, 44(12):1644--1656, 2009.

\bibitem{calis2017general}
G.~Calis and O.~O. Koyluoglu.
\newblock A general construction for {PMDS} codes.
\newblock {\em IEEE Communications Letters}, 21(3):452--455, 2017.

\bibitem{cauchy1813recherches}
A.~L.~B. Cauchy.
\newblock Recherches sur les nombres.
\newblock {\em J. École Polytech.}, 9:99--123, 1813.

\bibitem{chaussade2009skew}
L.~Chaussade, P.~Loidreau, and F.~Ulmer.
\newblock Skew codes of prescribed distance or rank.
\newblock {\em Designs, Codes and Cryptography}, 50(3):267--284, 2009.

\bibitem{davenport1935addition}
H.~Davenport.
\newblock On the addition of residue classes.
\newblock {\em Journal of the London Mathematical Society}, 10:30--32, 1935.

\bibitem{Delenclos/Leroy:2007}
J.~Delenclos and A.~Leroy.
\newblock Noncommutative symmetric functions and w-polynomials.
\newblock {\em Journal of Algebra and Its Applications}, 06(05):815--837, 2007.

\bibitem{delsarte1978}
P.~Delsarte.
\newblock Bilinear forms over a finite field, with applications to coding
  theory.
\newblock {\em Journal of Combinatorial Theory, Series A}, 25(3):226--241,
  1978.

\bibitem{etzion2018vector}
T.~Etzion and A.~Wachter-Zeh.
\newblock Vector network coding based on subspace codes outperforms scalar
  linear network coding.
\newblock {\em IEEE Transactions on Information Theory}, 64(4):2460--2473,
  2018.

\bibitem{Forney:1970}
G.~D. Forney~Jr.
\newblock Convolutional codes {I:} algebraic structure.
\newblock {\em {IEEE} Transactions on Information Theory}, 16(6):720--738,
  1970.

\bibitem{gabidulin1985}
E.~M. Gabidulin.
\newblock Theory of codes with maximum rank distance.
\newblock {\em Problemy Peredachi Informatsii}, 21(1):3--16, 1985.

\bibitem{gabidulin1991}
E.~M. Gabidulin, A.~Paramonov, and O.~Tretjakov.
\newblock Ideals over a non-commutative ring and their application in
  cryptology.
\newblock In {\em Advances in Cryptology -- EUROCRYPT’91}, pages 482--489.
  Springer, 1991.

\bibitem{gaborit2013low}
P.~Gaborit, G.~Murat, O.~Ruatta, and G.~Z{\'e}mor.
\newblock Low rank parity check codes and their application to cryptography.
\newblock In {\em Workshop on Coding and Cryptography (WCC)}, 2013.

\bibitem{gluesing2019skew}
H.~Gluesing-Luerssen.
\newblock Skew-polynomial rings and skew-cyclic codes.
\newblock {\em arXiv preprint arXiv:1902.03516}, 2019.

\bibitem{Gomez/Lobillo/Navarro:2019}
J.~Gómez-Torrecillas, F.~Lobillo, and G.~Navarro.
\newblock Cyclic distances of idempotent convolutional codes.
\newblock {\em Journal of Symbolic Computation}, 2019.

\bibitem{Gomez:2013}
J.~G{\'o}mez-Torrecillas.
\newblock Basic module theory over non-commutative rings with computational
  aspects of operator algebras.
\newblock In M.~Barkatou, T.~Cluzeau, G.~Regensburger, and M.~Rosenkranz,
  editors, {\em Algebraic and Algorithmic Aspects of Differential and Integral
  Operators}, pages 23--82, Berlin, Heidelberg, 2014. Springer Berlin
  Heidelberg.

\bibitem{Gomez/Lobillo/Navarro:2016b}
J.~G{\'{o}}mez{-}Torrecillas, F.~J. Lobillo, and G.~Navarro.
\newblock A new perspective of cyclicity in convolutional codes.
\newblock {\em IEEE Transactions on Information Theory}, 62(5):2702 -- 2706,
  2016.

\bibitem{Gomez/Lobillo/Navarro:2017c}
J.~G{\'o}mez-Torrecillas, F.~J. Lobillo, and G.~Navarro.
\newblock A {Sugiyama}-like decoding algorithm for convolutional codes.
\newblock {\em {IEEE} Transactions on Information Theory}, 63(10):6216 -- 6226,
  2017.

\bibitem{Gomez/Lobillo/Navarro/Neri:2018}
J.~G{\'{o}}mez{-}Torrecillas, F.~J. Lobillo, G.~Navarro, and A.~Neri.
\newblock {Hartmann-Tzeng} bound and skew cyclic codes of designed {Hamming}
  distance.
\newblock {\em Finite Fields and Their Applications}, 50:84--112, 2018.

\bibitem{hartmann1972generalizations}
C.~R. Hartmann and K.~K. Tzeng.
\newblock Generalizations of the {BCH} bound.
\newblock {\em Information and control}, 20(5):489--498, 1972.

\bibitem{hocquenghem1959codes}
A.~Hocquenghem.
\newblock Codes correcteurs d’erreurs.
\newblock {\em Chiffres}, 2(2):147--56, 1959.

\bibitem{kshevetskiy2005}
A.~Kshevetskiy and E.~M. Gabidulin.
\newblock The new construction of rank codes.
\newblock In {\em Proceedings of the International Symposium on Information
  Theory (ISIT) 2005}, pages 2105--2108, Sept 2005.

\bibitem{lam1985general}
T.-Y. Lam.
\newblock {\em A general theory of Vandermonde matrices}.
\newblock Center for Pure and Applied Mathematics, University of California,
  Berkeley, 1985.

\bibitem{lam1988vandermonde}
T.-Y. Lam and A.~Leroy.
\newblock Vandermonde and {W}ronskian matrices over division rings.
\newblock {\em Journal of Algebra}, 119(2):308--336, 1988.

\bibitem{Lang:2002}
S.~Lang.
\newblock {\em Algebra}, volume 211 of {\em Graduate texts in mathematics}.
\newblock Springer, revised third edition, 2002.

\bibitem{leroy2012noncommutative}
A.~Leroy.
\newblock Noncommutative polynomial maps.
\newblock {\em Journal of Algebra and its Applications}, 11(04):1250076, 2012.

\bibitem{martinez2017roots}
U.~Mart{\'\i}nez-Pe{\~n}as.
\newblock On the roots and minimum rank distance of skew cyclic codes.
\newblock {\em Designs, Codes and Cryptography}, 83(3):639--660, 2017.

\bibitem{ne18pmds}
A.~Neri and A.-L. Horlemann-Trautmann.
\newblock Random construction of {P}artial {MDS} codes.
\newblock {\em Designs, Codes and Cryptography}, to appear.

\bibitem{ore1933theory}
O.~Ore.
\newblock Theory of non-commutative polynomials.
\newblock {\em Annals of mathematics}, pages 480--508, 1933.

\bibitem{overbeck2008structural}
R.~Overbeck.
\newblock Structural attacks for public key cryptosystems based on {G}abidulin
  codes.
\newblock {\em Journal of cryptology}, 21(2):280--301, 2008.

\bibitem{rawat2013optimal}
A.~S. Rawat, O.~O. Koyluoglu, N.~Silberstein, and S.~Vishwanath.
\newblock Optimal locally repairable and secure codes for distributed storage
  systems.
\newblock {\em IEEE Transactions on Information Theory}, 60(1):212--236, 2013.

\bibitem{roos1982generalization}
C.~Roos.
\newblock A generalization of the {BCH} bound for cyclic codes, including the
  hartmann-tzeng bound.
\newblock {\em Journal of Combinatorial Theory, Series A}, 33(2):229--232,
  1982.

\bibitem{roos1983new}
C.~Roos.
\newblock A new lower bound for the minimum distance of a cyclic code.
\newblock {\em IEEE Transactions on Information Theory}, 29(3):330--332, 1983.

\bibitem{roth1991}
R.~M. Roth.
\newblock Maximum-rank array codes and their application to crisscross error
  correction.
\newblock {\em IEEE Transactions on Information Theory}, 37(2):328 --336, mar
  1991.

\bibitem{roth1996tensor}
R.~M. Roth.
\newblock Tensor codes for the rank metric.
\newblock {\em IEEE Transactions on Information Theory}, 42(6):2146--2157,
  1996.

\bibitem{silva2011universal}
D.~Silva and F.~R. Kschischang.
\newblock Universal secure network coding via rank-metric codes.
\newblock {\em IEEE Transactions on Information Theory}, 57(2):1124--1135,
  2011.

\bibitem{silva2008rank}
D.~Silva, F.~R. Kschischang, and R.~Koetter.
\newblock A rank-metric approach to error control in random network coding.
\newblock {\em IEEE Transactions on Information Theory}, 54(9):3951--3967,
  2008.

\bibitem{singleton1964maximum}
R.~Singleton.
\newblock Maximum distance $q$-nary codes.
\newblock {\em IEEE Transactions on Information Theory}, 10(2):116--118, 1964.

\bibitem{sage}
W.~Stein et~al.
\newblock {\em {S}age {M}athematics {S}oftware ({V}ersion 8.9)}.
\newblock The Sage Development Team, 2019.
\newblock {\tt http://www.sagemath.org}.

\bibitem{vanderWaerden:1949}
B.~L. van~der Waerden.
\newblock {\em Modern Algebra}, volume~I.
\newblock Frederick Ungar Publishing Co., 1949.

\bibitem{VanLint/Wilson:1986}
J.~H. Van~Lint and R.~M. Wilson.
\newblock On the minimum distance of cyclic codes.
\newblock {\em {IEEE} Transactions on Information Theory}, 32(1):23--40,
  January 1986.

\bibitem{vosper1956critical}
A.~G. Vosper.
\newblock The critical pairs of subsets of a group of prime order.
\newblock {\em Journal of the London Mathematical Society}, 1(2):200--205,
  1956.

\end{thebibliography}

\end{document}